\definecolor{linkcolor}{RGB}{170,0,0} 
\definecolor{citecolor}{RGB}{44,160,46} 
\definecolor{urlcolor}{RGB}{0,51,153} 
\theoremstyle{definition}
\newtheorem{definition}{Definition}
\newtheorem{remark}{Remark}
\theoremstyle{remark}
\theoremstyle{plain}
\newtheorem{theorem}{Theorem}
\newtheorem{lemma}{Lemma}
\newtheorem{cor}{Corollary}
\newcommand{\vect}[1]{\overrightarrow{\bm{#1}}}
\newcommand{\poa}{\mathrm{PoA}}
\newcommand{\cost}{\mathrm{cost}}
\title{Inoculation strategies for bounded degree graphs}
\author{
Mason DiCicco\footnote{Computer Science Department, WPI. [mtdicicco@wpi.edu, dreichman@wpi.edu]}
\and
Henry Poskanzer\footnote{hposkanzer@wpi.edu}
\and
Daniel Reichman$^*$
}
\begin{document}

\maketitle

\begin{abstract}
   We analyze a game-theoretic abstraction of epidemic containment played on an undirected graph $G$: each player is associated with a node in $G$ and can either acquire protection from a contagious process or risk infection. After decisions are made, an infection starts at a random node $v$ and propagates through all unprotected nodes reachable from $v$. It is known that the price of anarchy (PoA) in $n$-node graphs can be as large as $\Theta(n)$. Our main result is a tight bound of order $\sqrt{n\Delta}$ on the PoA, where $\Delta$ is the \emph{maximum degree} of the graph. We also study additional factors that can reduce the PoA, such as higher thresholds for contagion and varying the costs of becoming infected vs. acquiring protection.
\end{abstract}

\section{Introduction}

Networks can be conducive to the spread of undesirable phenomena such as infectious diseases, computer viruses, and false information. A great deal of research has been aimed at studying computational challenges that arise when trying to contain a contagious process~\cite{marathe2013computational,babay2022controlling,braunstein2016network}.

One factor that can contribute to the spread of contagion is the discrepancy between \emph{locally} optimal behavior of rational agents and \emph{globally} optimal behavior that minimizes the total cost to the agents in the network. For example, individuals in a computer network may prefer not to install anti-virus software because it is too expensive, whereas a network administrator may prefer to install copies at key points, limiting the distance a virus could spread and the global damage to the network. The former strategy would be considered a \emph{locally optimal} solution if each individual minimizes their own individual cost, whereas the latter strategy would be a \emph{socially optimal} solution if it minimizes the total cost to all individuals in the network.

How much worse locally optimal solutions can be compared to the social optimum can be quantified by the classical game theoretic notions of \emph{Nash equilibria} and \emph{price of anarchy}. (PoA)~\cite{roughgarden2002bad,roughgarden2005selfish}. Informally, given a multiplayer game, a strategy is a Nash equilibrium if no player can improve her utility by unilaterally switching to another strategy. Then, the PoA is the ratio of the total cost of the \emph{worst} Nash equilibrium to the social optimum. The larger the PoA, the larger the potential cost players in the game may experience due to selfish, uncoordinated behavior. Hence, it is of interest to investigate methods of reducing the PoA in games.

We study the PoA of a game-theoretic abstraction of epidemic containment introduced by~\cite{aspnes2006inoculation}. The \emph{inoculation game} is an $n$-player game in which each player is associated with a node in an undirected graph. A player can buy security against infection at a cost of $C > 0$, or they can choose to accept the risk of infection. If a node is infected, its player must pay a cost $L > 0$. After each player has made their decision, an adversary chooses a random starting point for the infection. The infection then propagates through the graph; any unsecured node that is adjacent to an infected node is also infected.

It is known that the PoA of the inoculation game can be as large as $\Omega(n)$, with the $n$-star (a node connected to $n-1$ other nodes) being one example of a network leading to such a PoA\footnote{This is asymptotically the largest possible PoA; it is shown in~\cite{aspnes2006inoculation} that in any $n$-node network the PoA is at most $O(n)$.}. This raises the question of the relationship between graph-theoretic parameters and the PoA of the inoculation game. This question was explicitly mentioned in~\cite{chen2010better} as an interesting direction for future research. Additional properties of the game may also influence the PoA, such as the relative costs of infection and security, or the threshold of infection.

We study the relationship between the aforementioned factors and the PoA. One motivation for our study is that understanding the links between properties of the inoculation game and the PoA may shed light on methods for designing networks that are less susceptible to contagion. It may also shed light on the effectiveness of interventions (e.g., changing the cost of acquiring inoculations) aimed at controlling contagion. Specifically, we examine following questions:

\begin{itemize}
\item Motivated by the relationship between ``superspreaders" and contagion, we analyze how the \emph{maximum degree} influences the PoA, obtaining asymptotically tight upper and lower bounds in terms of the number of nodes $n$ and the maximum degree $\Delta$. We also study the PoA in graphs with certain structural properties, such as planar graphs.

\item One can try to reduce the PoA by changing the values of $C$ and $L$ (e.g., by making inoculations more accessible). Previous results regarding the PoA~\cite{aspnes2006inoculation,chen2010better} typically examined fixed values of $C$ and $L$. In contrast, we analyze the PoA for specific networks for all $C$ and $L$. 

\item \emph{Complex contagion} refers to contagion models where a node becomes infected only if multiple neighbors are infected. We study the case where the threshold for contagion is 2 (as opposed to 1) and provide very simple analysis of the PoA for certain networks in this contagion model.
\end{itemize}

We also record the asymptotic PoA for graphs families such as trees, planar graphs and random graphs. Details can be found in the Appendix \ref{section:other}.

\subsection{Preliminaries}

Following~\cite{aspnes2006inoculation}, we describe the infection model and the multiplayer game we study as well as a useful characterization of Nash equilibria. 

\subsubsection{Inoculation game}

Unless stated otherwise, we consider graphs with $n$ nodes and identify the set of nodes by integers $[n]:=\{1,\cdots, n\}$. An \emph{automorphism} of a graph $G$ is a permutation $\sigma$ of its vertex set $[n]$ such that $i$ is adjacent to $j$ if and only if $\sigma(i)$ is adjacent to $\sigma(j)$. A graph $G=(V,E)$ is \emph{vertex transitive} if for every two vertices $i,j \in V$ there is automorphism $f$ of $G$ such that $f(i)=j$.
 
\begin{definition}[Inoculation game]
The inoculation game is a one-round, $n$-player game, played on an undirected graph $G$. We assume $G$ is a connected graph and each node is a player in the game. Every node $i$ has two possible actions: Inoculate against an infection, or do nothing and risk being infected. Throughout the paper we assign $1$ to the action of inoculating and $0$ the action of not inoculating. We say the cost of inoculation is $C>0$ and the cost of infection is $L>0$. 
\end{definition}

\begin{remark}
We always assume that $C$ and $L$ are constants independent of $n$, unless otherwise stated. In particular, our lower bounds generally require that the \emph{ratio} $C/L=\Theta(1)$. We discuss the relationship between the costs and the PoA in Section \ref{subsection:costs}.
\end{remark}

The \emph{strategy} of each node $i$ is the probability of inoculating, denoted by $a_i \in [0,1]$, and the \emph{strategy profile} for $G$ is represented by the vector $\vect{a} \in [0,1]^n$. If $a_i \in \{0,1\}$, we call the strategy \emph{pure}, and otherwise \emph{mixed}.

\medskip
Note that a mixed strategy is a probability distribution $\mathcal{D}$ over pure strategies. The cost of a mixed strategy profile to individual $i$ is equal to the expected cost over $\mathcal{D}$,
\begin{align*}
\cost_i(\vect{a}) &= C \cdot \Pr(i \text{ inoculates}) + L \cdot \Pr(i \text{ is infected}) \\
&= C \cdot a_i + L \cdot (1-a_i) p_i(\vect{a}),
\end{align*}
where $p_i(\vect{a})$ denotes the probability that $i$ becomes infected given strategy profile $\vect{a}$ \emph{conditioned on $i$ not inoculating}. The total social cost of $\vect{a}$ is equal to the sum of the individual costs,
\[
\cost(\vect{a}) = \sum_{i=1}^{n} \cost_i(\vect{a}) .
\]

\begin{definition}[Attack graph]
Given a strategy profile $\vect{a}$, let $I_{\vect{a}}$ denote the set of secure nodes (nodes which have inoculated). The \emph{attack graph}, which we denote by $G_{\vect{a}}$, is the sub-graph induced by the set of insecure nodes:
\[
G_{\vect{a}}=G \setminus I_{\vect{a}}.
\]
\end{definition}

\begin{figure}[ht]
    \centering
    \includegraphics[width=0.4 \textwidth]{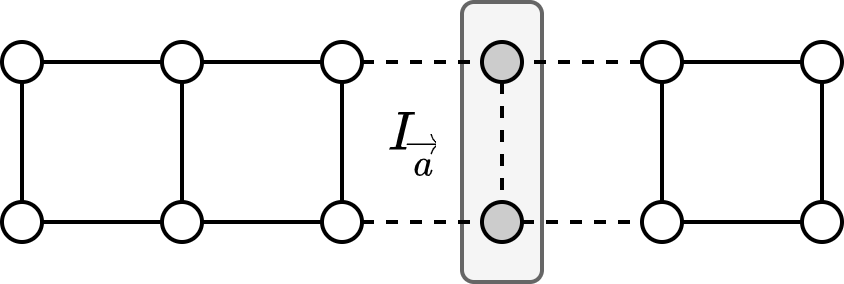}
    \caption{The two vertices in the shaded region are inoculated. The remaining vertices form  the attack graph consisting of two connected components of sizes $4$ and $6$ respectively.}
    \label{fig:grid}
\end{figure}

After every node has decided whether or not to inoculate, a node $i \in V$ is chosen uniformly at random (over all nodes) as the starting point of the infection. If $i$ is not inoculated, then $i$ and every insecure node reachable from $i$ in $G_{\vect{a}}$ are infected. Note that $I_{\vect{a}}$ and $G_{\vect{a}}$ are random variables unless $\vect{a}$ is pure. When the strategies are pure, \cite{aspnes2006inoculation} give the following characterization for the social cost:

\begin{theorem}[\cite{aspnes2006inoculation}]
\label{th:costPure}
Let $\vect{a}$ be a pure strategy profile for a graph $G$. Then,
\[
\cost(\vect{a}) =  C |I_{\vect{a}}| + \frac{L}{n} \sum_{i =1}^\ell k_i^2,
\]
where $k_1,\cdots,k_\ell$ denote the sizes of the connected components in $G_{\vect{a}}$.
\end{theorem}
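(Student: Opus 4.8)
The plan is to split the social cost according to whether each node inoculates, and then to evaluate the infection probability $p_i(\vect{a})$ of a non-inoculated node purely in terms of the connected component of the attack graph that contains it.

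First, since $\vect{a}$ is pure, each $a_i \in \{0,1\}$, so the individual-cost formula collapses: a node $i \in I_{\vect{a}}$ contributes exactly $C$, while a node $i \notin I_{\vect{a}}$ contributes $L \cdot p_i(\vect{a})$. Summing over all nodes gives
\[
\cost(\vect{a}) = C|I_{\vect{a}}| + L\sum_{i \notin I_{\vect{a}}} p_i(\vect{a}),
\]
so it remains to identify the second sum with $\tfrac{1}{n}\sum_{j} k_j^2$.

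The key step is computing $p_i(\vect{a})$ for an insecure node $i$. By definition, the attack graph $G_{\vect{a}}$ is the subgraph induced by the insecure nodes, so every non-inoculated node lies in exactly one connected component of $G_{\vect{a}}$; say $i$ lies in a component of size $k_j$. The infection seed $v$ is chosen uniformly among all $n$ nodes. I claim $i$ is infected precisely when $v$ is one of the $k_j$ nodes of $i$'s component: if $v$ is inoculated or lies in a different component, then the infection never reaches $i$ along insecure nodes, whereas if $v$ is any insecure node in $i$'s component then $i$ is reachable from $v$ within $G_{\vect{a}}$ and hence infected. Therefore $p_i(\vect{a}) = k_j/n$.

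Finally, I would group the outer sum by connected components. Each component of the attack graph of size $k_j$ consists of exactly $k_j$ insecure nodes, each with infection probability $k_j/n$, so it contributes $k_j \cdot (k_j/n) = k_j^2/n$; summing over all $\ell$ components yields $\sum_{i \notin I_{\vect{a}}} p_i(\vect{a}) = \tfrac{1}{n}\sum_{j=1}^{\ell} k_j^2$, and multiplying by $L$ completes the identity. The only genuinely delicate point is the characterization of $p_i$: one must observe that reachability within the attack graph is symmetric and that the seed is drawn uniformly over the \emph{full} vertex set (so the denominator is $n$, not the number of insecure nodes), after which the component-wise bookkeeping is immediate.
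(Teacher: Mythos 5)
Your proof is correct, and it is the standard argument for this fact (the paper itself states this theorem as a citation to Aspnes et al.\ without reproducing a proof): decompose the cost by inoculation status, observe that an insecure node in a component of size $k_j$ is infected exactly when the uniformly chosen seed lands in that component (so $p_i(\vect{a}) = k_j/n$), and sum component by component. You also correctly flag the one subtle point --- that the seed is drawn from all $n$ nodes, with an inoculated seed producing no infection --- so nothing is missing.
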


\subsubsection{Nash equilibria}

\begin{definition}[Nash equilibrium]
\label{def:nash}
A strategy profile $\vect{a}$ is a \emph{Nash equilibrium} (NE) if no nodes can decrease their individual cost by changing their own strategy. 

\medskip
Formally, let $\vect{a^*} := (a_i^*,\vect{a^*_{-i}})$ be a strategy profile where $\vect{a^*_{-i}}$ denotes the strategy profile of all players except node $i$. Then, $\vect{a^*}$ is a Nash equilibrium if, for all $i$,
\[
\cost_i((a_i^*,\vect{a}_{-i}^*)) \leq \cost_i((a_i,\vect{a_{-i}^*})) \text{ for all } a_i \neq a_i^*.
\]
\end{definition}
Similarly to arbitrary strategies, the cost of an NE is simply the sum of expected costs of the individual vertices. 

In the inoculation game, Nash equilibria are characterized by the expected sizes of the \emph{connected components} in the attack graph. 

\begin{theorem}[\cite{aspnes2006inoculation}] 
\label{th:nashchar}
Let $S(i)$ denote the expected size of the component containing node $i$ in the attack graph \textbf{conditioned on $i$ not inoculating}, and let $t = Cn/L$. A strategy $\vect{a}$ is a Nash equilibrium if and only if every node satisfies the following:

\begin{enumerate}
    \item if $a_i=0$, then $S(i) \leq t$
    \item if $a_i=1$, then $S(i) \geq t$
    \item if $0<a_i<1$, then $S(i) = t$ 
\end{enumerate}
\end{theorem}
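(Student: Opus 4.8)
The plan is to fix an arbitrary node $i$ and analyze its cost as a function of its own strategy $a_i$, holding the strategies $\vect{a_{-i}}$ of all other players fixed. The starting point is the cost formula
\[
\cost_i(\vect{a}) = C \cdot a_i + L \cdot (1-a_i)\, p_i(\vect{a}),
\]
together with the crucial observation that $p_i(\vect{a})$ does not depend on $a_i$. Indeed, $p_i$ is the probability that $i$ is infected \emph{conditioned on $i$ not inoculating}; since players randomize independently, conditioning on the event that $i$ is insecure leaves the distribution of the other players' decisions and of the random starting node unchanged, and once $i$ is insecure the event that $i$ is infected is determined entirely by those two sources of randomness. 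Hence $\cost_i$ is an \emph{affine} function of $a_i$,
\[
\cost_i = L\, p_i + a_i\,(C - L\, p_i),
\]
and minimizing it over $a_i \in [0,1]$ reduces to inspecting the sign of the slope $C - L p_i$.

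First I would identify $p_i$ with $S(i)/n$. Conditioned on $i$ not inoculating and on a fixed realization of the other players' decisions, the attack graph $G_{\vect{a}}$ is determined, and $i$ becomes infected exactly when the uniformly random starting node lands in the connected component $K_i$ of $i$ in $G_{\vect{a}}$ (any such starting node is itself insecure and reaches $i$). This conditional probability equals $|K_i|/n$, so averaging over the other players' randomness gives $p_i = \EE[|K_i|]/n = S(i)/n$. Substituting, the slope becomes $C - L\,S(i)/n = (L/n)\,(t - S(i))$ with $t = Cn/L$, so its sign is governed entirely by the comparison of $S(i)$ with $t$.

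Finally I would read off the best-response conditions from the affine cost. If $S(i) < t$ the slope is positive and the unique minimizer is $a_i = 0$; if $S(i) > t$ the slope is negative and the unique minimizer is $a_i = 1$; and if $S(i) = t$ the cost is constant in $a_i$, so every $a_i \in [0,1]$ is a best response. Requiring that $a_i$ be a best response for every $i$ then yields exactly the three stated conditions, with the weak inequalities at $a_i \in \{0,1\}$ absorbing the boundary case $S(i) = t$; conversely, any profile satisfying the three conditions places every player at a best response and is therefore a Nash equilibrium.

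The step I would be most careful about is the independence of $p_i$ from $a_i$ together with the identity $p_i = S(i)/n$ in the \emph{mixed} setting. One must check that conditioning on ``$i$ does not inoculate'' genuinely decouples $i$'s own decision from the infection event, and that the expectation over the other players' mixed strategies commutes cleanly with the $1/n$ factor arising from the uniform choice of starting node. Once this linearity-plus-identity is in hand, the remaining best-response analysis is immediate, so this identity is the real content of the proof.
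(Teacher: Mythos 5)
Your proof is correct and takes essentially the same approach as the paper, which disposes of this (cited) theorem in one sentence: the threshold $t = Cn/L$ is exactly the point where the cost of inoculating equals the expected cost of not inoculating, $C = L\,S(i)/n$ --- i.e., where your affine cost has zero slope. Your write-up simply supplies the details the paper leaves implicit (that $p_i$ is independent of $a_i$, the identity $p_i = S(i)/n$, and the sign-of-slope best-response analysis), and these details are all sound.
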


This follows from the definition of Nash equilibria, recognizing that the threshold on the expected component size, $t=Cn/L$, is the point where the cost of inoculating equals the (expected) cost of not inoculating, $C=LS(i)/n$.

The following upper bound regarding the cost of every Nash equilbrium was observed in~\cite{aspnes2006inoculation}.
\begin{cor}[\cite{aspnes2006inoculation}]
\label{cor:simpleNashUB}
For any graph $G$, every Nash equilibrium has cost at most $\min\{C,L\}n$.
\end{cor}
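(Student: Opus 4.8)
The plan is to prove that every Nash equilibrium has cost at most $\min\{C,L\}n$ by exhibiting two simple strategy profiles whose costs bound the Nash cost from above, leveraging the fact that a Nash equilibrium is, by definition, stable against unilateral deviations. The key observation is that we do not need to analyze the structure of an arbitrary equilibrium directly; instead, we compare each player's equilibrium cost to the cost that player would incur by switching to a fixed pure action.

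\medskip
First I would bound the Nash cost by $Cn$. Consider any Nash equilibrium $\vect{a^*}$ and any fixed player $i$. If player $i$ were to deviate to the pure strategy $a_i = 1$ (always inoculate), then the cost to $i$ under this deviation is exactly $C$, since an inoculated node never pays the infection cost. By the Nash condition (Definition \ref{def:nash}), we have $\cost_i(\vect{a^*}) \leq \cost_i((1, \vect{a^*_{-i}})) = C$. Summing over all $n$ players gives $\cost(\vect{a^*}) = \sum_{i=1}^n \cost_i(\vect{a^*}) \leq Cn$.

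\medskip
Next I would bound the Nash cost by $Ln$. Here the natural deviation is $a_i = 0$ (never inoculate). Under this deviation the cost to $i$ is $L \cdot p_i((0, \vect{a^*_{-i}}))$, which is at most $L$ since a probability is bounded by $1$. Again the Nash condition yields $\cost_i(\vect{a^*}) \leq \cost_i((0, \vect{a^*_{-i}})) \leq L$, and summing over all players gives $\cost(\vect{a^*}) \leq Ln$. Combining the two bounds gives $\cost(\vect{a^*}) \leq \min\{C,L\} n$, as claimed.

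\medskip
I expect this proof to be essentially routine, so the main subtlety — the only place requiring care — is verifying that the deviation cost computations are legitimate given that the other players' strategies $\vect{a^*_{-i}}$ may be mixed and thus induce a random attack graph. The cost expression $\cost_i(\vect{a}) = C a_i + L(1-a_i) p_i(\vect{a})$ from the preliminaries already accounts for this by defining $p_i$ as an expectation over the induced distribution, so when $i$ fixes its own action the formula collapses cleanly to $C$ (for $a_i=1$) or $L \, p_i$ (for $a_i=0$) regardless of how the other players randomize. Thus the bounds $\le C$ and $\le L$ on the per-player deviation costs hold unconditionally, and no further case analysis on the equilibrium's mixed structure is needed.
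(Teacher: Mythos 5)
Your proof is correct, but it is organized differently from the paper's. The paper splits into two cases on the sign of $C-L$: when $C \leq L$ it uses exactly your first argument (each player can deviate to inoculating, so every individual equilibrium cost is at most $C$), but when $C > L$ it instead invokes a structural fact --- that the all-zeros profile is the \emph{only} Nash equilibrium in that regime (a consequence of Theorem \ref{th:nashchar}, since $t = Cn/L > n \geq S(i)$ makes conditions 2 and 3 unsatisfiable) --- and computes its cost as exactly $Ln$. You avoid that structural claim entirely by running the deviation argument symmetrically in both directions: deviating to $a_i = 1$ caps each player's cost at $C$, deviating to $a_i = 0$ caps it at $L$ (since $p_i \leq 1$), and both caps hold simultaneously for all values of $C$ and $L$. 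Your version is more uniform and more self-contained, needing only Definition \ref{def:nash} and the cost formula; the paper's version buys the extra information that the $C>L$ regime has a unique equilibrium of cost exactly $Ln$, which is what makes the remark following the corollary (tightness of the bound) immediate in that case. One pedantic point worth a sentence in a final write-up: Definition \ref{def:nash} only compares against deviations $a_i \neq a_i^*$, so when $a_i^* = 1$ (resp.\ $a_i^* = 0$) the bound $\cost_i(\vect{a^*}) \leq C$ (resp.\ $\leq L$) should be noted as holding trivially by direct evaluation rather than via the deviation inequality; this does not affect correctness.
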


\begin{proof}
If $C > L$, then the only Nash equilibrium is the strategy $\vect{a} = 0^n$ where no node inoculates, which has cost $Ln$. Otherwise, if $C \leq L$, then the individual cost to any node is at most $C$ (any node will switch its strategy to inoculate, if preferable.)

\end{proof}

\begin{remark}
Later, we show that there exist graphs whose Nash equilibria meet the upper bound of Corollary \ref{cor:simpleNashUB}; for all $C,L>0$, there exists a Nash equilibrium with cost $\min\{C,L\}n$. This property yields bounds on the PoA that are ``stable" with respect to the costs $C,L$.
\end{remark}

\subsubsection{Price of Anarchy}

\begin{definition}[Price of Anarchy]
The Price of Anarchy (PoA) of an inoculation game played on a graph $G$ is equal to the ratio between the cost of the \emph{worst} Nash equilibrium to the cost of the socially optimal strategy\footnote{Observe that as $C,L>0$ the cost is always strictly positive.},
\[
\poa(G) = \frac{\max_{\vect{a} : \text{Nash eq.}} \cost(\vect{a})}{\min_{\vect{a}} \cost(\vect{a})}.
\]
\end{definition}

To upper bound the price of anarchy, we must lower bound the cost of the socially optimal strategy and upper bound the cost of the worst Nash equilibrium. By Corollary \ref{cor:simpleNashUB}, we have the simple upper bound,

\begin{equation}
    \label{eq:poaUB}
    \poa(G) \leq \frac{\min\{C,L\} n}{\min_{\vect{a}} \cost(\vect{a})}
\end{equation}

with equality when there exists a Nash equilibrium with maximum possible cost, $\min\{C,L\}n$.

\subsection{Summary of results}

This section contains the statements for all of our main results. For each, we give a brief description of the conceptual idea behind each proof. 

\subsubsection{Bounding the PoA in terms of the maximum degree}

It was proved in~\cite{aspnes2006inoculation} that the price of anarchy can be as large $\Omega(n)$. Their lower bound is based on the star graph $K_{1,n-1}$ where the optimal strategy (inoculating the root) has cost $O(1)$. Note that inoculating the central node of the star is maximally ``efficient," in that it splits the attack graph into $n-1$ components. This notion of efficiency (i.e., number of components created per inoculation) is the crux of the following result.

\begin{theorem}
\label{th:poaDeltaUB}
Let $G$ be a graph with maximum degree $\Delta$. Then, $\poa(G) = O(\sqrt{n \Delta})$ for all $C,L>0$.
\end{theorem}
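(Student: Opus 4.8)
The plan is to invoke the PoA upper bound of Equation (\ref{eq:poaUB}), which uses Corollary \ref{cor:simpleNashUB} to cap the worst Nash cost at $\min\{C,L\}n$ and thereby reduces the whole problem to \emph{lower-bounding the social optimum} $\min_{\vect{a}} \cost(\vect{a})$. Concretely, I will show $\min_{\vect{a}} \cost(\vect{a}) = \Omega(\sqrt{CLn/\Delta})$. First I would note that the social cost of any mixed profile is an average of the costs of its pure realizations, so the minimum is attained at a pure profile; this lets me work with the exact formula of Theorem \ref{th:costPure}. Writing $k := |I_{\vect{a}}|$ for the number of inoculated nodes and $k_1,\dots,k_\ell$ for the component sizes of the resulting attack graph, we have $\cost(\vect{a}) = Ck + \frac{L}{n}\sum_i k_i^2$.

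The crux — and the only place where $\Delta$ enters — is to bound the number of components $\ell$ that inoculating $k$ nodes can produce. I would prove the elementary lemma that deleting a single vertex of degree $d$ increases the number of connected components by at most $d-1$, since its former component can break into at most $d$ pieces (one per neighbor). Deleting the $k$ vertices of $I_{\vect{a}}$ one at a time, each of degree at most $\Delta$, starting from the connected graph $G$, yields $\ell \le 1 + k(\Delta-1) \le k\Delta$ for $k \ge 1$. This is precisely the quantitative form of the ``number of components created per inoculation'' intuition flagged before the statement.

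With $\sum_i k_i = n-k$ distributed over at most $\ell \le k\Delta$ components, Cauchy--Schwarz gives $\sum_i k_i^2 \ge (n-k)^2/\ell \ge (n-k)^2/(k\Delta)$. Restricting to $k \le n/2$ (so $n - k \ge n/2$) I obtain $\cost(\vect{a}) \ge Ck + \frac{Ln}{4k\Delta}$, and AM--GM on these two terms yields $\cost(\vect{a}) = \Omega(\sqrt{CLn/\Delta})$ uniformly. The complementary case $k > n/2$ is immediate, since then the inoculation term alone satisfies $Ck > Cn/2 = \Omega(n)$, which dominates the $O(\sqrt{n})$ target for large $n$.

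Substituting this lower bound into Equation (\ref{eq:poaUB}) gives
\[
\poa(G) = O\!\left(\frac{\min\{C,L\}\, n}{\sqrt{CLn/\Delta}}\right) = O\!\left(\frac{\min\{C,L\}}{\sqrt{CL}}\,\sqrt{n\Delta}\right) = O(\sqrt{n\Delta}),
\]
where the last step uses $\min\{C,L\}/\sqrt{CL} = \sqrt{\min\{C,L\}/\max\{C,L\}} \le 1$, which simultaneously confirms the bound holds for \emph{all} $C,L>0$. I expect the main obstacle to be the component-counting lemma and, in particular, justifying the sequential-deletion argument so that the per-step degree bound $\Delta$ is legitimate at every step; once that is in place, the remainder is a routine one-variable optimization together with the $k > n/2$ boundary case.
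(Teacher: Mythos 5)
Your proposal is correct and follows essentially the same route as the paper's proof: both cap the worst Nash equilibrium at $\min\{C,L\}n$ via Corollary \ref{cor:simpleNashUB}, reduce to pure strategies so Theorem \ref{th:costPure} applies, use the key bound $\ell \le k\Delta$ on the number of attack-graph components, and apply convexity (your Cauchy--Schwarz step) before optimizing over the number of inoculations. The only differences are cosmetic --- you use AM--GM with a case split at $k = n/2$ where the paper minimizes the convex function exactly, and your sequential-deletion lemma gives a slightly more careful justification of $\ell \le k\Delta$ than the paper's one-line assertion.
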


\begin{proof}[Proof Idea.]
We make two observations for the sake of lower bounding the social optimum:
\begin{enumerate}
    \item The number of components in the attack graph is bounded above by the number of edges leaving secure nodes. Therefore, if every secure node has degree at most $\Delta$, then there are at most $|I_{\vect{a}}| \Delta$ insecure components. 
    
    \item In the ideal case (i.e., the optimal strategy), all components will have the same size. 
\end{enumerate}
Then, the result follows from a straightforward manipulation of Theorem \ref{th:costPure}.

\end{proof}

We note that bounds on the PoA in terms of the number of nodes and the maximum degree have been stated before without proof. Please see the related work section for more details. 

\medskip
We also show that Theorem \ref{th:poaDeltaUB} is the strongest possible upper bound; for arbitrary values of $n$ and $\Delta$, we can construct a graph $G$ with price of anarchy $\Omega(\sqrt{n \Delta})$. 

\begin{theorem}
\label{th:largerDeltaLB}
For all $n \geq 4\Delta - 2$ and $C \geq L$, there exists a graph $G$ with $\displaystyle \poa(G) = \Omega(\sqrt{n \Delta})$.
\end{theorem}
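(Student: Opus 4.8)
The plan is to exhibit, for the given $n$ and $\Delta$, a single connected graph $G$ on which the all-zeros profile is a Nash equilibrium of cost $Ln$ while some explicit profile has cost only $O(\sqrt{n/\Delta})$, so that their ratio is $\Omega(\sqrt{n\Delta})$. The graph I would use is a balanced ``caterpillar of spiders'': take $s$ hub vertices joined in a path $h_1 - h_2 - \cdots - h_s$, and attach to each hub roughly $\Delta$ vertex-disjoint branches (simple paths) of length $g$, choosing the parameters so that $s \approx \sqrt{Ln/(C\Delta)}$, $g \approx \sqrt{Cn/(L\Delta)}$, and $s \cdot \Delta \cdot g \approx n$. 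Every hub then has degree at most $\Delta$ (two incident edges used for the spine, the rest for branches), every branch vertex has degree at most $2 \le \Delta$, and the total vertex count is $n$. The hypothesis $n \ge 4\Delta - 2$ is what lets these parameters be realized as valid integers with $s,g \ge 1$.

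For the numerator, I would show the all-zeros profile $\vect{a} = 0^n$ is a Nash equilibrium of cost $Ln$. Since $G$ is connected, under $0^n$ every vertex lies in a single insecure component of size $n$, so $S(i) = n$ for all $i$; because $C \ge L$ we have $t = Cn/L \ge n = S(i)$, so condition (1) of Theorem \ref{th:nashchar} holds at every node and $0^n$ is a Nash equilibrium. Its cost is $\tfrac{L}{n}n^2 = Ln$ by Theorem \ref{th:costPure}, which is simultaneously the maximum possible cost of any equilibrium by Corollary \ref{cor:simpleNashUB} (since $\min\{C,L\} = L$). Hence the worst Nash equilibrium has cost exactly $Ln$.

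For the denominator, I would upper bound the social optimum by the cost of the profile that inoculates precisely the $s$ hubs. Removing the hubs disconnects $G$ into the $\approx s\Delta$ branches, each an insecure component of size $g$, so by Theorem \ref{th:costPure} this profile costs $Cs + \tfrac{L}{n}(s\Delta)g^2$. Using $s\Delta g \approx n$ this equals $Cs + \tfrac{Ln}{s\Delta}$, and substituting $s \approx \sqrt{Ln/(C\Delta)}$ gives $O(\sqrt{CLn/\Delta}) = O(\sqrt{n/\Delta})$ because $C,L = \Theta(1)$. Therefore $\min_{\vect a}\cost(\vect a) = O(\sqrt{n/\Delta})$, and dividing the two bounds yields $\poa(G) \ge Ln \big/ O(\sqrt{n/\Delta}) = \Omega(\sqrt{n\Delta})$.

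I expect the main obstacle to be the discretization: picking integer $s$ and $g$ and distributing branches among the hubs so that the vertex count is exactly $n$ and no hub exceeds degree $\Delta$, all while keeping $s = \Theta(\sqrt{n/\Delta})$ and $g = \Theta(\sqrt{n/\Delta})$ up to constants — this is precisely where $n \ge 4\Delta - 2$ is invoked and where a small amount of padding (e.g. lengthening a single branch) may be needed. A secondary check is that inoculating the hubs genuinely splits $G$ into components of size $g$ with no large leftover piece, which is immediate from the caterpillar structure. Everything else reduces to direct applications of Theorems \ref{th:costPure} and \ref{th:nashchar}.
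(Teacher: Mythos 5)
Your proposal is correct and takes essentially the same route as the paper: the paper's construction subdivides every edge of a $\Delta$-regular graph on $m \approx 2\sqrt{n/\Delta}$ vertices into a path of length $\approx \sqrt{n/\Delta}$, which is functionally the same ``hubs joined by long paths'' graph as your caterpillar, and its analysis likewise pairs the all-zeros Nash equilibrium of cost $Ln$ (valid since $C \geq L$) with the $O(\sqrt{n/\Delta})$ hub-inoculation upper bound on the optimum via Theorem~\ref{th:costPure}. One small caveat: since your interior hubs carry only $\Delta-2$ branches, your construction degenerates when $\Delta = 2$ (the caterpillar collapses to a bare spine), so that case needs the trivial fallback of a path or cycle with every $\sqrt{n}$-th vertex inoculated.
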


\begin{proof}[Proof Idea.]
We construct a graph which is ``ideal" with respect to notion of efficiency used in Theorem \ref{th:poaDeltaUB}. In particular, it should be possible to inoculate $\gamma$ nodes such that the attack graph contains $\gamma \Delta$ equally-sized components.

Such a graph is not difficult to come by. For instance, when $\Delta=2$, the cycle graph $C_n$ has this property for $\gamma=\sqrt{n}$; inoculating every $\sqrt{n}$'th node will split the cycle into $\sqrt{n}$ paths of equal length (up to rounding).

\end{proof}

\subsubsection{The relationship between PoA, $C$ and $L$}
\label{subsection:costs}

We have seen that if $C=L$, then the strategy which inoculates no nodes is a Nash equilibrium with cost $Ln$. Can we significantly decrease the cost of the worst-case equilibrium (and PoA) by decreasing $C$, say to $L/2$? We prove that there are graphs for which the worst-case Nash equilibrium has cost $C n$ for any $0<C\leq L$  (i.e., Corollary \ref{cor:simpleNashUB} cannot be improved due to a matching lower bound). This implies that the asymptotic PoA for these graphs remains the same so long as $C/L=\Theta(1)$. We sketch the argument. 

\begin{lemma}
For a graph $G$, let $f(n) := \poa(G)$ for costs $C=L$. Suppose that, if $L$ remains fixed but the cost of inoculation decreases to $C'<L$, then there is a Nash equilibrium in the new game of cost $C' n$. Let $g(n):=\poa(G)$ for costs $C'$ and $L$. If $C'/L=\Theta(1)$, then $g(n)=\Omega(f(n)).$ 
\end{lemma}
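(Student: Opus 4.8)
The plan is to unpack both $f(n)$ and $g(n)$ according to the definition of the price of anarchy, writing each as a ratio of a worst-case Nash cost (the numerator) to a social optimum (the denominator), and then compare the two ratios directly. Write $\mathrm{OPT}_{C,L} := \min_{\vect{a}} \cost(\vect{a})$ for the social optimum, and $\cost_{C,L}$ for the social cost, under costs $C$ and $L$. Under costs $(L,L)$ we have $\min\{C,L\} = L$, so Corollary \ref{cor:simpleNashUB} gives that every Nash equilibrium costs at most $Ln$; hence $f(n) \le Ln / \mathrm{OPT}_{L,L}$. Under costs $(C',L)$ with $C' < L$ we instead have $\min\{C,L\} = C'$, so the hypothesis furnishes a Nash equilibrium of cost $C'n$, which by Corollary \ref{cor:simpleNashUB} is the largest cost any Nash equilibrium can have; hence the worst Nash equilibrium costs \emph{exactly} $C'n$ and $g(n) = C'n / \mathrm{OPT}_{C',L}$.

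The crux is then to compare the two social optima, and the key observation is that lowering the inoculation cost can only lower the social optimum. First I would note that for any strategy profile $\vect{a}$ the infection probabilities $p_i(\vect{a})$ depend only on $\vect{a}$ and $G$, not on $C$ or $L$, so summing the individual-cost formula gives $\cost_{C,L}(\vect{a}) = C\sum_i a_i + L\sum_i (1-a_i)p_i(\vect{a})$. Since $C' \le L$ and each $a_i \ge 0$, this yields $\cost_{C',L}(\vect{a}) \le \cost_{L,L}(\vect{a})$ for every $\vect{a}$ (mixed profiles included, so no appeal to Theorem \ref{th:costPure} or to purity of the optimum is needed). Applying this to a minimizer $\vect{a^*}$ of $\cost_{L,L}$ gives $\mathrm{OPT}_{C',L} \le \cost_{C',L}(\vect{a^*}) \le \cost_{L,L}(\vect{a^*}) = \mathrm{OPT}_{L,L}$, i.e.\ $\mathrm{OPT}_{L,L}/\mathrm{OPT}_{C',L} \ge 1$.

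Combining these facts finishes the argument in a single chain of inequalities:
\[
\frac{g(n)}{f(n)} \;\ge\; \frac{C'n/\mathrm{OPT}_{C',L}}{Ln/\mathrm{OPT}_{L,L}} \;=\; \frac{C'}{L}\cdot\frac{\mathrm{OPT}_{L,L}}{\mathrm{OPT}_{C',L}} \;\ge\; \frac{C'}{L}.
\]
Because $C'/L = \Theta(1)$ by assumption, this shows $g(n) = \Omega(f(n))$, as desired.

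No step here is genuinely difficult; the whole lemma is a bookkeeping exercise once the numerators and denominators are set up correctly, so the thing to be careful about is the direction of each inequality. The numerator of $g$ must be pinned down as the exact maximum $C'n$ (using both the hypothesis and the matching upper bound of Corollary \ref{cor:simpleNashUB}), while for $f$ only the upper bound $Ln$ is required; and the monotonicity of the social optimum must point the right way so that the factor $\mathrm{OPT}_{L,L}/\mathrm{OPT}_{C',L} \ge 1$ helps rather than hurts. The only conceptual point worth stating explicitly is that $p_i(\vect{a})$ is independent of the cost parameters, which is what lets the social-optimum comparison hold uniformly over all strategy profiles.
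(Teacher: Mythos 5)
Your proof is correct and follows essentially the same route as the paper's: pin down the worst Nash cost in each game (exactly $C'n$ in the new game by hypothesis plus Corollary \ref{cor:simpleNashUB}, and at most $Ln$ when $C=L$), observe that lowering $C$ can only lower the social optimum, and combine the ratios. The only cosmetic difference is that the paper uses that the worst equilibrium at $C=L$ costs \emph{exactly} $Ln$, whereas you correctly note that the upper bound $f(n)\le Ln/\mathrm{OPT}_{L,L}$ already suffices; your explicit pointwise verification that $\cost_{C',L}(\vect{a})\le\cost_{L,L}(\vect{a})$ for every profile is a detail the paper leaves implicit.
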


\begin{proof}
Reducing the cost of inoculation while keeping $L$ fixed can only decrease the social optimum. On the other hand, by assumption, the cost of the worst case Nash equilibrium for the new game is equal to $C'n$. The result follows as we have seen that, when $C=L$, the cost of the worst case Nash equilibrium is exactly $L n$. 

\end{proof}

It follows that, to establish the asymptotic stability of PoA, it suffices to prove that there is a Nash equilibrium with cost $Cn$ for all $C<L$. By Theorem \ref{th:nashchar}, when $C \geq L$, the pure strategy in which no node inoculates is a Nash equilibrium with cost $Ln$. Similarly, if $C \leq L/n$, then the pure strategy in which every node inoculates is a Nash equilibrium with cost $Cn$. However, if $C \in (L/n,L)$, then it is not clear whether there is a Nash equilibrium with cost $Cn$. We now show that, for certain graphs, \emph{there is} a Nash equilibrium of cost $C n$ for all $0< C < L$ therefore establishing the asymptotic stability of the PoA for such graphs.   

\medskip
We say that a Nash equilibrium is \emph{fractional} if no node has a pure strategy; every node $i$ chooses her action with probability differing from $0$,$1$. 

\begin{lemma}
\label{lem:mixedLB}
The cost of every fractional Nash equilibrium is equal to $C n$.
\end{lemma}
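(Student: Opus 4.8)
The plan is to use the Nash equilibrium characterization in Theorem~\ref{th:nashchar}. In a fractional Nash equilibrium, every node $i$ satisfies $0 < a_i < 1$, so by case~3 of that theorem we must have $S(i) = t = Cn/L$ for \emph{every} node $i$, where $S(i)$ is the expected size of the component containing $i$ in the attack graph conditioned on $i$ not inoculating. The goal is to compute $\cost(\vect{a})$ for such a profile directly from the individual-cost formula and show each node contributes exactly $C$.

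First I would recall that the individual cost of node $i$ under a mixed profile is
\[
\cost_i(\vect{a}) = C \cdot a_i + L \cdot (1 - a_i)\, p_i(\vect{a}),
\]
where $p_i(\vect{a})$ is the probability $i$ is infected conditioned on $i$ not inoculating. The key link to make is between $p_i(\vect{a})$ and $S(i)$. Conditioned on $i$ not inoculating, $i$ is infected exactly when the (uniformly random) initial infection point lands in $i$'s connected component of the attack graph; since the starting node is uniform over all $n$ nodes, the probability of this event given a realization of the component is (component size)$/n$, and taking the expectation over the randomness of the other players' mixed choices yields $p_i(\vect{a}) = S(i)/n$. Substituting $S(i) = Cn/L$ gives $p_i(\vect{a}) = C/L$ exactly.

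Now I would plug this back in: $\cost_i(\vect{a}) = C a_i + L(1-a_i)(C/L) = C a_i + C(1 - a_i) = C$. Thus every node contributes exactly $C$ regardless of its mixing probability $a_i$, and summing over the $n$ nodes gives $\cost(\vect{a}) = Cn$, as claimed.

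The main subtlety — and the step I would be most careful about — is the identity $p_i(\vect{a}) = S(i)/n$, i.e., correctly relating the infection probability to the expected component size. One must verify that the uniform choice of starting vertex is independent of the realization of the attack graph, so that conditioning on a fixed attack-graph realization the infection probability for $i$'s component is precisely its size over $n$, and that taking expectations commutes as claimed. This is exactly the relationship implicit in the derivation of Theorem~\ref{th:nashchar} (the threshold $t = Cn/L$ is defined so that $C = L\,S(i)/n$ is the break-even point), so the work reduces to making that identity explicit; once it is in hand, the rest is the one-line substitution above.
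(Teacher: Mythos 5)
Your proposal is correct and follows essentially the same route as the paper's proof: invoke Theorem~\ref{th:nashchar} to get $S(i) = Cn/L$ for every node, deduce $p_i(\vect{a}) = S(i)/n = C/L$, and substitute into the individual-cost formula so each node contributes exactly $C$. The only difference is cosmetic --- you spell out the identity $p_i(\vect{a}) = S(i)/n$ (uniform starting point, expectation over the attack-graph realization), which the paper asserts implicitly; this is a welcome clarification, not a divergence.
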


\begin{proof}
Suppose strategy $\vect{a}$ is a Nash equilibrium with $a_i \in (0,1)$ for all $i$. As a consequence of Theorem \ref{th:nashchar}, the expected component size $S(i) = Cn/L$ for all $i$. Thus, probability of infection for any node $i$ is equal to $p_i(\vect{a}) = C/L$. By definition,
\[
\cost(\vect{a}) = \sum_{i=1}^n C \cdot a_i + L \cdot (1-a_i) \frac{C}{L} = C n.
\]
\end{proof}

It is non-trivial to show a fractional equilibrium exists (note that Nash's theorem does not guarantee existence because the space of fractional strategies is not compact). However, it is possible to show that some graphs will always exhibit such an equilibrium.

\begin{theorem}
\label{th:starNash}
Let $G=K_{1,n-1}$ (i.e., the $n$-star). For all $C \in (L/n,L)$, there exists a fractional Nash equilibrium.
\end{theorem}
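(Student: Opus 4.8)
The plan is to construct the equilibrium explicitly by exploiting the symmetry of the star. By Theorem~\ref{th:nashchar}, a strategy profile in which every node inoculates with probability strictly between $0$ and $1$ is a Nash equilibrium if and only if $S(i)=t$ for \emph{every} node $i$, where $t=Cn/L$. Since $C\in(L/n,L)$ we have $t\in(1,n)$, and this is exactly the range in which we can hope to balance the expected component sizes. Because all leaves of the star are interchangeable under automorphisms, I would search for an equilibrium in which every leaf inoculates with a common probability $q\in(0,1)$ and the root inoculates with probability $p\in(0,1)$; since we only need to exhibit \emph{one} fractional equilibrium, this symmetric ansatz reduces the whole problem to two scalar equations.

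First I would compute $S(\text{root})$. Conditioned on the root not inoculating, its component consists of the root together with every uninoculated leaf, so $S(\text{root})=1+(n-1)(1-q)$. Setting this equal to $t$ determines $q$ uniquely: $q=(n-t)/(n-1)$. The hypothesis $t\in(1,n)$ immediately gives $q\in(0,1)$, so this is a legitimate mixed strategy for the leaves.

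Next I would compute $S(\text{leaf})$. Conditioned on a fixed leaf $j$ not inoculating, its component is $\{j\}$ if the root inoculates (probability $p$), and otherwise (probability $1-p$) it is $j$ together with the root and the other uninoculated leaves, of expected size $2+(n-2)(1-q)$. A short calculation shows this latter quantity equals $t+q$, since it differs from $S(\text{root})=t$ only by counting leaf $j$ with certainty rather than with probability $1-q$. Hence $S(\text{leaf})=p+(1-p)(t+q)$, and setting this equal to $t$ and solving the resulting linear equation yields $p=q/(t+q-1)$.

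The final step, which is really the crux, is verifying $p\in(0,1)$. Positivity is clear because $q>0$ and $t>1$ force the denominator $t+q-1>0$; and $p<1$ is equivalent to $q<t+q-1$, i.e.\ to $t>1$, which again holds precisely because $C>L/n$. Thus both $p$ and $q$ lie strictly inside $(0,1)$, and by construction every node satisfies $S(i)=t$, so Theorem~\ref{th:nashchar} certifies that the resulting profile is a fractional Nash equilibrium. The only place the hypothesis $C\in(L/n,L)$ is used is in guaranteeing these two strict inclusions: outside this interval one of $p,q$ leaves $(0,1)$, which is consistent with the earlier observation that pure equilibria take over at the endpoints.
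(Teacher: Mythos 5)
Your proof is correct and follows essentially the same route as the paper: the same symmetric ansatz (a common leaf probability and a separate root probability), the same component-size computations for $S(\mathrm{root})$ and $S(\mathrm{leaf})$, and the same appeal to Theorem~\ref{th:nashchar}. The only difference is cosmetic: the paper fixes the relation between the two probabilities and invokes a continuity (intermediate-value) argument in the remaining free parameter, whereas you solve the two equations $S(\mathrm{root})=S(\mathrm{leaf})=t$ in closed form and verify directly that both probabilities lie strictly in $(0,1)$.
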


\begin{proof}[Proof Idea.]
The structure of the star enables us to explicitly calculate $S(i)$ for a family of fractional strategies $\{\vect{a_q}\}_{q \in (0,1)}$. Furthermore, this family of strategies has the property that $S(i)$ is the same continuous function of $q$ for all $i$, with $\lim_{q \to 0}S(i)=n$ and $\lim_{q \to 1}S(i)=1$. Thus, there must exist a $q \in (0,1)$ such that $S(i)=Cn/L$ for all $i$ (i.e., a fractional Nash equilibrium).

\end{proof}

In fact, we can show that if the graph is very symmetric, then there always exists a fractional Nash equilibrium.

\begin{theorem}
\label{th:vertexTransitiveNash}
Suppose $G$ is vertex-transitive. Then, for all $C \in (L/n,L)$, there exists a fractional Nash equilibrium. 
\end{theorem}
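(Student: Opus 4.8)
The plan is to reduce the $n$ separate equilibrium conditions to a single-variable application of the intermediate value theorem, exactly as in the star case of Theorem \ref{th:starNash}, with vertex-transitivity supplying the symmetry that collapses everything to one unknown. First I would restrict attention to the \emph{uniform} mixed profile $\vect{a_q} = (q,\dots,q)$ in which every node inoculates with the same probability $q \in (0,1)$. Under this profile, conditioned on node $i$ not inoculating, each other node inoculates independently with probability $q$, so $S(i)$ can be written as a sum over subsets $X \subseteq V \setminus \{i\}$ of inoculated nodes weighted by $q^{|X|}(1-q)^{n-1-|X|}$. This is a polynomial in $q$ of degree at most $n-1$, hence continuous on $[0,1]$.

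The key step is to show that vertex-transitivity forces $S(i)$ to be independent of $i$ under $\vect{a_q}$. Given any two nodes $i,j$, choose an automorphism $f$ of $G$ with $f(i)=j$. Because all inoculation probabilities are equal, the unconditioned product measure on inoculation sets is invariant under $f$; moreover $f$ sends the event ``$i$ does not inoculate'' to the event ``$j$ does not inoculate.'' Consequently $f$ pushes the conditional distribution of the attack graph given $i$ does not inoculate onto the conditional distribution given $j$ does not inoculate, and it carries the component of $i$ bijectively to the component of $f(i)=j$, preserving its size. Hence $S(i)=S(j)$, and I write $S(q)$ for this common value.

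It then remains to hit the target $t = Cn/L$. Evaluating the endpoints: as $q \to 0$ the attack graph is almost surely all of $G$, which is connected by assumption, so $S(0)=n$; as $q \to 1$ node $i$ is almost surely isolated, so $S(1)=1$. The hypothesis $C \in (L/n,L)$ is equivalent to $t \in (1,n)$, so continuity of $S(q)$ together with the intermediate value theorem yields some $q^* \in (0,1)$ with $S(q^*)=t$. At this $q^*$ every node plays $a_i = q^* \in (0,1)$ and satisfies $S(i)=t$, so by Theorem \ref{th:nashchar} the profile $\vect{a_{q^*}}$ is a fractional Nash equilibrium.

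I expect the main obstacle to be the symmetry argument in the second paragraph: one must verify carefully that conditioning on ``$i$ does not inoculate'' commutes with applying the automorphism $f$, so that the pushforward of the conditional attack-graph distribution is \emph{exactly} the conditional distribution centered at $j$. This is precisely where uniformity of the strategy is essential---without equal probabilities the product measure need not be automorphism-invariant and the $S(i)$ could differ. Once this is settled, the remaining endpoint computation and the IVT step are a routine adaptation of the argument already used for the star.
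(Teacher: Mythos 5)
Your proposal is correct and follows essentially the same route as the paper: restrict to the uniform profile, use an automorphism mapping $i$ to $j$ together with the invariance of the product measure to conclude $S(i)=S(j)$, and then apply the intermediate value theorem to the common polynomial $S(q)$ with endpoint values $n$ and $1$. Your treatment of the conditioning on ``$i$ does not inoculate'' is in fact slightly more careful than the paper's summation over inoculation sets, but it is the same argument in substance.
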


\begin{proof}[Proof Idea.]
Vertex-transitivity means that every pair of nodes $i \neq j$ are indistinguishable based on local graph structure. Then, a sufficiently ``symmetric" strategy should exhibit symmetry in the expected component sizes. Indeed, consider the strategy $\vect{a_p}$ in which every node inoculates with the same probability $p$. We prove that, under this strategy, $S(i)$ is the \emph{same} continuous function of $p \in (0,1)$ for all $i$. Thus, there must exist a $p$ such that $\vect{a_p}$ is a fractional Nash equilibrium.

\end{proof}

\subsubsection{Bounding the PoA for larger thresholds of infection}

We study the price of anarchy when the threshold of infection is higher; the adversary initially infects two different nodes, and an insecure node becomes infected if \emph{multiple} of its neighbors are infected. In particular, we prove that the price of anarchy can still be $\Omega(n)$ in this scheme.

\medskip
We first show that the price of anarchy can dramatically decrease when all thresholds are $2$. Recall that~\cite{aspnes2006inoculation} proved that the star graph $K_{1,n-1}$ has price of anarchy $\Omega(n)$ (for threshold 1). In contrast, we have the following:

\begin{theorem}
\label{th:starthreshold}
Suppose that the threshold of every node is $2$. Then, $\poa(K_{1,n-1})=O(1)$.
\end{theorem}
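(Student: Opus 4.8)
The plan is to exploit a degeneracy of threshold-$2$ contagion on the star: since every leaf has degree $1 < 2$, a leaf can never accumulate two infected neighbors and can therefore be infected \emph{only} by being selected as one of the two seeds. The one vertex that contagion can reach is the center. First I would record this as a structural claim, valid for \emph{any} strategy profile: writing $s_1, s_2$ for the two seeds, the set of infected vertices is contained in $\{s_1, s_2\} \cup \{\text{center}\}$. Consequently the cascade terminates immediately and the expected number of infected vertices is at most $3$, irrespective of who inoculates. This is the conceptual heart of the argument; once it is in place the rest is bookkeeping that must control both the numerator and denominator of the $\poa$ ratio.

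Next I would upper bound the cost of every Nash equilibrium. Because a leaf is infected exactly when it is seeded, and the two seeds form a uniformly random pair independent of the inoculation decisions, a leaf's probability of infection conditioned on not inoculating is precisely $2/n$. Hence its expected cost of not inoculating is $2L/n$, which is strictly below $C$ once $n > 2L/C$; since $C, L$ are constants with $C/L = \Theta(1)$, this holds for all large $n$. Directly from the definition of a Nash equilibrium (a leaf strictly prefers $a_i = 0$), every leaf must play $a_i = 0$, so only the center may inoculate and $|I_{\vect{a}}| \le 1$. Combined with the structural claim, every Nash equilibrium has cost at most $C \cdot 1 + L \cdot 3 = C + 3L = O(1)$.

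It then remains to lower bound the social optimum by a constant, and here I would argue that \emph{every} profile $\vect{a}$ costs at least $2L$. The expected number of infected vertices is at least the expected number of insecure seeds, which by independence of the seeds from the inoculation choices equals $\tfrac{2}{n}\sum_i (1 - a_i) = 2 - \tfrac{2}{n}\sum_i a_i$ (using $\sum_i \Pr[i \in \{s_1,s_2\}] = 2$). Therefore $\cost(\vect{a}) \ge C\sum_i a_i + L\bigl(2 - \tfrac{2}{n}\sum_i a_i\bigr) = 2L + \bigl(\sum_i a_i\bigr)\bigl(C - 2L/n\bigr) \ge 2L$ whenever $n > 2L/C$, and this bound applies to mixed profiles as well. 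Thus $\min_{\vect{a}} \cost(\vect{a}) = \Omega(1)$, and dividing the two estimates gives $\poa(K_{1,n-1}) = O(1)$ for all large $n$; the finitely many smaller values of $n$ (those with $n \le 2L/C$) contribute at most a further $O(1)$ factor.

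The main obstacle is not arithmetic but the interplay of the two bounds: naively one sees that no equilibrium is expensive, yet a small $\poa$ also requires ruling out a vanishing optimum, so the degree-$1$ structural fact must be used twice — once to cap the numerator (at most the center ever inoculates, at most three vertices are ever infected) and once, via the insecure-seed count, to keep the denominator bounded away from zero. Care is also needed to make the leaf-infection-probability computation robust to mixed strategies, which the argument above handles by appealing to the Nash definition and to linearity of expectation rather than to the threshold-$1$ component characterization of Theorem \ref{th:costPure}.
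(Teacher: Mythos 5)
Your proof is correct, and its first two steps coincide with the paper's argument: the structural fact that a degree-$1$ leaf can be infected only by being chosen as a seed (so $p_{\mathrm{leaf}}(\vect{a}) = 2/n$), and the consequence that for $n > 2L/C$ no leaf inoculates in any Nash equilibrium, so every equilibrium costs $O(C+L)$. Where you genuinely depart from the paper is the lower bound on the social optimum. The paper argues locally about the center: either some node inoculates with probability at least $1/2$, giving $\cost(\vect{a^*}) \geq C/2$, or else the insecure center is infected with probability at least $1/4$ (two insecure leaf seeds suffice at threshold $2$), giving $\cost(\vect{a^*}) \geq L/8$. You instead argue globally: since the seed pair is independent of the inoculation choices, $\EE[\#\text{infected}] \geq \tfrac{2}{n}\sum_i (1-a_i)$, hence $\cost(\vect{a}) \geq 2L + \left(\sum_i a_i\right)\left(C - 2L/n\right) \geq \min\{2L,\, nC\}$. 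Your version buys two things: it never uses the star structure, so it shows that in the two-seed threshold-$2$ model the social optimum of \emph{every} graph is $\Omega(\min\{C,L\})$, and it handles mixed profiles by pure linearity of expectation with no case analysis. The paper's version is a pointwise argument about a single vertex, which is slightly more self-contained but less portable. One small tightening of your write-up: since $\min\{2L, nC\} \geq \min\{2L, C\}$ for all $n \geq 1$, your optimum bound needs no restriction to large $n$ at all; the caveat about the finitely many $n \leq 2L/C$ is only required for the equilibrium (numerator) bound.
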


\begin{proof}[Proof Idea.]

Because a leaf has degree one, it can only become infected if chosen as the starting point. This means that the only node whose cost is influenced by the rest of the graph is the center. As the majority of players are effectively independent, the price of anarchy must be relatively low.

\end{proof}

However, even when all thresholds equal $2$, there are still cases where the price of anarchy is $\Omega(n)$. 

\begin{theorem}
\label{th:poathresholdLB}
If $C \geq L$, then there exists a graph $G$ for which $\poa(G) = \Omega(n)$.
\end{theorem}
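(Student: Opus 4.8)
The plan is to exhibit a single graph on which the no-inoculation profile is simultaneously a Nash equilibrium of cost $\Theta(n)$ and vastly more expensive than the social optimum. I would take $G=K_{2,n-2}$, the complete bipartite graph with two \emph{hubs} $u_1,u_2$ on one side and $n-2$ \emph{leaves} $w_1,\dots,w_{n-2}$ on the other, where every leaf is adjacent to both hubs (and to nothing else), every hub is adjacent to all leaves, and every node has threshold $2$. The intuition is that the two hubs form a bottleneck: a single pair of infected leaves suffices to infect both hubs (each hub then sees two infected neighbors), after which every remaining leaf sees two infected neighbors and the whole graph falls; yet inoculating the two hubs turns the leaves into an independent set, so the cascade cannot even begin.

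First I would check that the empty profile $\vect{a}=0^n$ is a Nash equilibrium whenever $C\ge L$. A node that does not inoculate pays $L\,p_i(\vect{a})\le L$, while switching to inoculation costs $C\ge L$; hence no unilateral deviation helps and $0^n$ is a Nash equilibrium (this is the $C\ge L$ case already underlying Corollary \ref{cor:simpleNashUB}). Next I would lower-bound its cost. With the two seeds drawn uniformly at random, the probability that both land among the $n-2$ leaves is $\tfrac{(n-2)(n-3)}{n(n-1)}=1-o(1)$. Conditioned on this event, the infection spreads in two rounds under the threshold-$2$ rule: the two infected leaves infect $u_1$ and $u_2$, and the two infected hubs then infect every remaining leaf, so all $n$ nodes are infected. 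Therefore $\EE[\#\text{infected}]\ge n\cdot(1-o(1))$, and the empty equilibrium has cost $L\cdot\EE[\#\text{infected}]=\Theta(Ln)=\Theta(n)$, so the worst Nash equilibrium has cost $\Omega(n)$. (If instead the two seeds are chosen adversarially, the adversary simply picks two leaves and the same conclusion holds, so the argument is robust to the exact seed model.)

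Finally I would bound the social optimum from above by inoculating only the two hubs $u_1,u_2$. In the resulting attack graph the leaves form an independent set, so no insecure node ever has two infected neighbors and the set of infected nodes is exactly the (at most two) seeds. This profile has cost at most $2C+2L=O(1)$, whence the social optimum is $O(1)$ and $\poa(G)=\Omega(n)/O(1)=\Omega(n)$. I expect the only delicate point to be the cascade analysis of the random seeds: one must verify that both hubs, and subsequently all leaves, are genuinely infected under the threshold-$2$ dynamics (the two-round argument above), and confirm that the remaining seed configurations (one hub and one leaf, or the two hubs) contribute only lower-order terms, so that the $\Theta(n)$ bound on $\EE[\#\text{infected}]$ is not eroded. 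Everything else reduces to the elementary cost accounting already used for pure profiles.
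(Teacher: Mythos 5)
Your proposal is correct and takes essentially the same approach as the paper: the paper also uses $K_{2,n-2}$, takes the all-insecure profile as the expensive Nash equilibrium (valid since $C \geq L$), and inoculates the two hub nodes to get an $O(1)$ social optimum. The only difference is that the paper adds an edge between the two hubs, so that \emph{every} seed pair (including a hub plus a leaf) infects the whole graph and the equilibrium cost is exactly $Ln$; this lets it skip your $1-o(1)$ expectation computation, which your edge-free version correctly supplies since the hub--leaf seed pairs carry only $O(1/n)$ probability mass.
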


\begin{proof}[Proof Idea.]

When thresholds are $1$, the star has a high price of anarchy because any node will infect the entire graph, but only one inoculation is required to split the graph into many components. The natural idea for threshold $2$ is to construct a graph in which any \emph{two} nodes will infect the entire graph, but only two inoculations are required to split the graph into many components (see Figure \ref{fig:bistar}).

\end{proof}

\subsection{Related work}

The seminal paper of~\cite{aspnes2006inoculation} introduced the inoculation game and showed constructively that every instance of the inoculation game has a pure Nash equilibrium, and some instances have many. In the same paper, it is shown that the price of anarchy for an arbitrary graph is at most $O(n)$, and that there exists a graph with price of anarchy $n/2$. Subsequent work has studied PoA on graph families such as grid graphs~\cite{moscibroda2006selfish} and expanders~\cite{kumar2010existence}.

Several works have extended the basic model of~\cite{aspnes2006inoculation} to analyze the effect of additional \emph{behaviors} on the PoA. For instance, ~\cite{moscibroda2006selfish} extend the model to include \emph{malicious} players whose goal is to maximize the cost to society. They prove that the social cost deteriorates as the number of malicious players increases, and the effect is magnified when the selfish players are \emph{unaware} of the malicious players. Somewhat conversely,~\cite{chen2010better} extend the model to include \emph{altruistic} players who consider a combination of their individual cost and the social cost (weighted by a parameter $\beta$). They prove that the social cost does indeed decrease as $\beta$ increases. Finally, ~\cite{meier2008windfall} consider a notion of \emph{friendship} in which players care about the welfare of their immediate neighbors. Interestingly, although a positive friendship factor $F>0$ is always preferable, the social cost does not necessarily decrease as $F$ increases. 

The question of how to reduce the PoA has been studied before (e.g.,~\cite{roughgarden2006severity}). For a survey regarding methods to reduce the PoA, see~\cite{roughgarden2007introduction}. The general question here is the following: how can we modify some aspect of a game to lower the PoA? To this end, variations on the \emph{infection process} (rather than the players) have also been studied; ~\cite{kumar2010existence} examine the price of anarchy in terms of the \emph{distance}, $d$, that the infection can spread from the starting point. They prove that the when $d=1$, the price of anarchy is at most $\Delta+1$, where $\Delta$ is the maximum degree of the graph. In this work, we comment on a \emph{complex contagion} extension of the model, where nodes only become infected if \emph{multiple} neighbors are infected. We show that this modification does not unilaterally decrease the PoA. There is a vast literature on complex contagion for variety of graph families~\cite{chalupa1979bootstrap,schoenebeck2016complex,eckles2018long,ebrahimi2015complex,janson2012bootstrap}. We are not aware of previous work that studies the PoA in our setting where every node has threshold 2 for infection. 


There are many classical models of epidemic spread. One of the most popular of these is the \emph{SIS model}~\cite{hethcote1973asymptotic}, which simulates infections like the flu, where no immunity is acquired after having been infected (as opposed to the \emph{SIR model}~\cite{kermack1927contribution}, in which individuals recover with permanent immunity). In this model, it was shown by~\cite{dezsHo2002halting} that the strategy which inoculates the highest-degree nodes in power-law random graphs has a much higher chance of eradicating viruses when compared to traditional strategies. It is also known that that the price of anarchy here increases as the expected proportion of high-degree nodes decreases~\cite{saha2014equilibria}. Furthermore, it is established by~\cite{wang2003epidemic, ganesh2005effect, prakash2012threshold} that epidemics die out quickly if the \emph{spectral radius} (which is known to be related to the maximum degree~\cite{cioabua2007spectral}) is below a certain threshold. This initiated the development of graph algorithms dedicated to minimizing the spectral radius by inoculating nodes~\cite{saha2015approximation}. 

\cite{chen2010better} consider the PoA in the inoculation game in graphs with maximum degree $\Delta$. They state in their paper: ``Indeed, we can show that even in the basic model of Aspnes et al. without altruism, the Price of Anarchy is bounded by $\sqrt{n \Delta}$ if all degrees are bounded by $\Delta$ (whereas the general bound is $\Theta(n))$." A similar statement is made in the PhD thesis~\cite{chen2011effects} of one of the authors of~\cite{chen2010better}. Both~\cite{chen2010better} and~\cite{chen2011effects} do not include proofs of these statements. We are not aware of a published proof of either a lower bound or an upper bound for the PoA in graph in terms of the maximum degree $\Delta$ and the number of nodes.

\section{PoA in terms of maximum degree}
\label{section:degree}

In this section we prove Theorems \ref{th:poaDeltaUB} and \ref{th:largerDeltaLB}.

\begin{remark}
As a mixed strategy is simply a distribution over pure strategies, the optimal cost can always be realized by a pure strategy. This enables the use of Theorem \ref{th:costPure} to bound the optimal cost.
\end{remark}

\begin{proof}[\textbf{Proof of Theorem \ref{th:poaDeltaUB}}]
Suppose $\gamma>0$ nodes are inoculated by strategy $\vect{a}$ and let $k_1 \leq \cdots \leq k_\ell$ denote the sizes of the connected components in $G_{\vect{a}}$. (If $\gamma=0$, then $\cost(\vect{a})=Ln$ and thus $\poa(G) = 1$.) By convexity, $\sum_{i=1}^\ell k_i^2$ is minimized when all components have the same size, $\displaystyle k_i=\frac{n-\gamma}{\ell}$ for all $i$. Thus, the optimal solution has cost at least

\begin{equation}
\label{eq:poaConvexUB}
\cost(\vect{a^*}) \geq \min\{C,L\}  \min_\gamma \left( \gamma + \frac{1}{n}\frac{\left( n-\gamma \right)^2}{\ell}\right)
\end{equation}

Note that every inoculation adds at most $\Delta$ components to the attack graph (i.e., $\ell \leq \gamma \Delta$), and (\ref{eq:poaConvexUB}) becomes

\[
\cost(\vect{a^*}) \geq  \min\{C,L\} \min_\gamma \left(\gamma \left(1+\frac{1}{\Delta n}\right) + \frac{n}{\gamma \Delta} - \frac{2}{\Delta}\right) 
\]

The function $\displaystyle f(\gamma)=\gamma \left(1+\frac{1}{\Delta n}\right) + \frac{n}{\gamma \Delta} - \frac{2}{\Delta}$ is clearly convex and attains its minimum when $\displaystyle \gamma = \frac{n}{\sqrt{1+\Delta n}}$. Substituting this value yields
\[
\cost(\vect{a^*}) \geq \min\{C,L\} \frac{2 \sqrt{n \Delta + 1}-1}{\Delta}.
\]
Corollary \ref{cor:simpleNashUB} completes the proof. Note that the $\min\{C,L\}$ term is cancelled out; the price of anarchy bound is independent of $C,L$.

\end{proof}

\begin{proof}[\textbf{Proof of Theorem \ref{th:largerDeltaLB}}]
Consider an arbitrary $\Delta$-regular graph $G$ on $m=2\sqrt{n/\Delta}$ vertices $v_1,\cdots,v_m$. We construct a new graph $G'$ with $n$ vertices by replacing the edges of $G$ with (disjoint) paths of length $\displaystyle \ell=(n-m)/ (m\Delta/2) = \sqrt{n/\Delta}-2/\Delta$. (Round path lengths such that the total number of nodes is $n$.)

Consider the strategy which secures $v_1,\cdots,v_m$. The inoculations cost $C m$ and create $m \Delta / 2$ components in $G'$ of size $\ell$. This strategy upper bounds the cost of the optimal strategy by
\[
C m + L (n-m) \frac{\ell}{n} = 2C\sqrt{n/\Delta} + 2L \frac{\left(n-2\sqrt{n/\Delta}\right)^2}{\sqrt{n^3\Delta}} = O(\sqrt{n/\Delta}).
\]
As $C \geq L$, the worst case Nash equilibrium has cost $Ln$. 

\end{proof}

\begin{figure}[ht]
    \centering
    \includegraphics[width = 0.3 \textwidth]{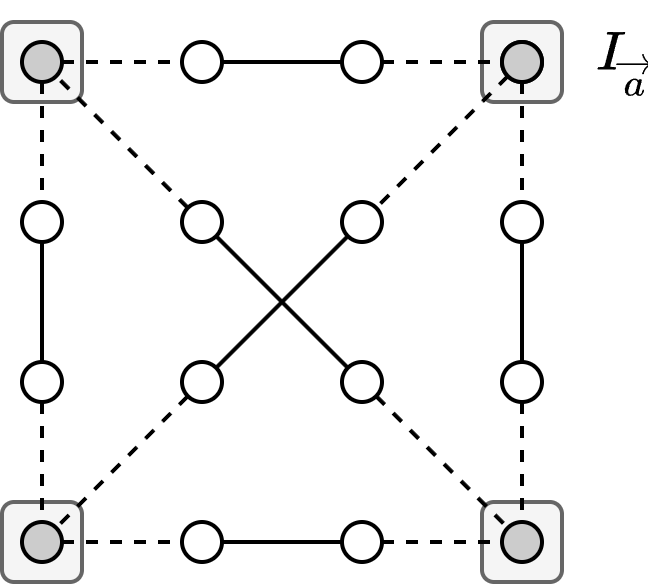}
    \caption{The inoculation strategy for this instance the graph constructed in Theorem \ref{th:largerDeltaLB} inoculates $4$ nodes, creating six disjoint paths of size $2$. This has cost significantly lower than the worst Nash equilibrium which does not inoculate at all.}
    \label{fig:pathclique}
\end{figure}

\section{Existence of fractional equilibria}

In this section we prove Theorems \ref{th:starNash} and \ref{th:vertexTransitiveNash}. 

\begin{proof}[\textbf{Proof of Theorem \ref{th:starNash}}]
Let $\vect{a}$ be the strategy in which every leaf inoculates with probability $p$ and the root inoculates with probability $q$. Then,
\begin{itemize}
    \item $S(\mathrm{root}) = 1 + (1-p)(n-1)$, 
    \item $S(\mathrm{leaf}) = q + (1-q)[2 + (1-p)(n-2)]$.
\end{itemize}
It is easy to verify that $\displaystyle S(\mathrm{root}) = S(\mathrm{leaf}) = \frac{n-q}{(n-2)q+1}$ when $\displaystyle p=\frac{(n-1)q}{(n-2)q+1}$. The former expression is a continuous function of $f(q) : [0,1] \to [1,n]$, with $f(0)=n$ and $f(1)=1$. Thus, for all $\displaystyle  C \in (L/n, L)$, there exists a $q \in (0,1)$ such that $\displaystyle S(\mathrm{root}) = S(\mathrm{leaf}) =\displaystyle \frac{Cn}{L}$ (i.e., $\vect{a}$ is a fractional Nash equilibrium). 

\end{proof}

\begin{proof}[\textbf{Proof of Theorem \ref{th:vertexTransitiveNash}}]
We prove that there exists a $p \in (0,1)$ such that the strategy $\vect{a_p} = p^n$ is a fractional Nash equilibrium. By the definition of vertex-transitivity, for any two nodes $i \neq j$, there exists an automorphism $f : G \to G$ such that $f(i)=j$. 

\medskip
Consider an arbitrary set of inoculated nodes, $A$, and their image, $f(A)$. By definition, 
\[
\Pr(I_{\vect{a}}=A) = \Pr(I_{\vect{a}}=f(A)) = p^{|A|}(1-p)^{n-|A|}
\]
Let $S(i|A)$ denote the size of the connected component containing $i$ in the attack graph $G \setminus A$. By vertex transitivity, $S(i|A) = S(f(i)|f(A)) = S(j|f(A))$. Then, by linearity of expectation,
\begin{align*}
S(i) &= \sum_{A} S(i|A) \Pr(I_{\vect{a}}=A) \\
&= \sum_{A} S(j|f(A)) \Pr(I_{\vect{a}}=f(A)) = S(j).
\end{align*}
Also note that $S(i)$ is a polynomial in $p$ satisfying $\lim_{p \to 0} S(i) = n$ and $\lim_{p \to 1} S(i) = 1$. Therefore, for all $\displaystyle  C \in (L/n, L)$, there exists a $p \in (0,1)$ such that $S(i) = Cn/L$ for all $i$.

\end{proof}

\section{Larger thresholds of infection}

In this section we prove Theorems \ref{th:starthreshold} and \ref{th:poathresholdLB}.

\begin{proof}[\textbf{Proof of Theorem \ref{th:starthreshold}}]

Any leaf node has only one neighbor and can only be infected if chosen at the start. Thus,
\[
p_{\mathrm{leaf}}(\vect{a})=\frac{n-1}{\binom{n}{2}} = \frac 2n,
\] 
and $\displaystyle \cost_{\mathrm{leaf}}(\vect{a}) = C a_i + \frac{2L (1-a_i)}{n}$. Then, for large enough $n$, no leaf node will inoculate in a Nash equilibrium. Therefore, the worst case Nash equilibrium has cost at most
\[
\max\{L,C\} + L \cdot (n-1) \frac{2}{n}= \max\{L,C\} + 2L\left(1 - \frac{1}{n}\right)
\]
Now consider the optimal strategy. If at least one node inoculates with probability $a_i \geq 1/2$, then $\cost(\vect{a^*}) \geq C/2$. Otherwise, the root (if insecure) is infected with probability at least $1/4$; either it is chosen at the start, or two insecure nodes are chosen. As the root inoculates with probability at most $1/2$, the optimal social cost is at least $L/8$.

\end{proof}

\begin{proof}[\textbf{Proof of Theorem \ref{th:poathresholdLB}}]

Consider the graph $G = K_{2,n-2}$ with an edge between the two nodes on the smaller side. A Nash equilibrium where every node chooses not to inoculate has cost $Ln$ as every two nodes will infect the entire graph. On the other hand, the strategy which inoculate both nodes on the smaller side upper bounds the cost of the social optimum by
\[
\cost(\vect{a^*}) \leq 2C + (n-2) \cdot L \cdot \frac{2}{n} \leq 2(C+L)
\]
Hence $\poa(G) = \Omega(n)$, concluding the proof. 

\end{proof}

\begin{figure}[ht]
    \centering
    \includegraphics[width = 0.25 \textwidth]{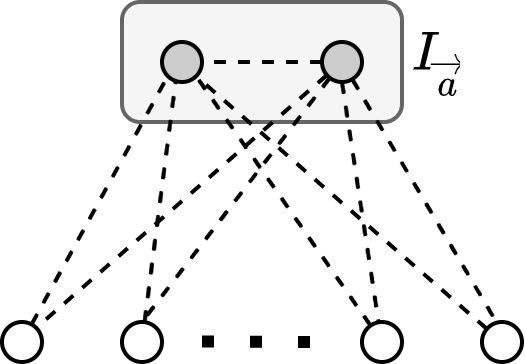}
    \caption{Inoculating the two central nodes of the graph constructed in Theorem \ref{th:poathresholdLB} is ``efficient" as it splits the rest of the graph into $n-2$ singleton components. This has a much lower social cost than the Nash equilibrium in which no node inoculates.} 
    \label{fig:bistar}
\end{figure}

\section*{Acknowledgements}
We are grateful to James Aspnes, David Kempe and Ariel Procaccia for useful comments. We thank the anonymous reviews for their feedback and suggesting a simpler proof of Theorem~\ref{th:largerDeltaLB}. Part of this work was done while the third author was visiting the Simons Institute for the Theory of Computing. Their hospitality is greatly acknowledged. 

\bibliographystyle{alpha}
\bibliography{biblio}

\newcommand{\etalchar}[1]{$^{#1}$}
\begin{thebibliography}{WCWF03}

\bibitem[ACY06]{aspnes2006inoculation}
James Aspnes, Kevin Chang, and Aleksandr Yampolskiy.
\newblock Inoculation strategies for victims of viruses and the sum-of-squares
  partition problem.
\newblock {\em Journal of Computer and System Sciences}, 72(6):1077--1093,
  2006.

\bibitem[BDS{\etalchar{+}}22]{babay2022controlling}
Amy~E Babay, Michael Dinitz, Aravind Srinivasan, Leonidas Tsepenekas, and Anil
  Vullikanti.
\newblock Controlling epidemic spread using probabilistic diffusion models on
  networks.
\newblock In {\em International Conference on Artificial Intelligence and
  Statistics}, pages 11641--11654. PMLR, 2022.

\bibitem[BDSZ16]{braunstein2016network}
Alfredo Braunstein, Luca Dall’Asta, Guilhem Semerjian, and Lenka
  Zdeborov{\'a}.
\newblock Network dismantling.
\newblock {\em Proceedings of the National Academy of Sciences},
  113(44):12368--12373, 2016.

\bibitem[BKW14]{benjamini2014mixing}
Itai Benjamini, Gady Kozma, and Nicholas Wormald.
\newblock The mixing time of the giant component of a random graph.
\newblock {\em Random Structures \& Algorithms}, 45(3):383--407, 2014.

\bibitem[CDK10]{chen2010better}
Po-An Chen, Mary David, and David Kempe.
\newblock Better vaccination strategies for better people.
\newblock In {\em Proceedings of the 11th ACM conference on Electronic
  commerce}, pages 179--188, 2010.

\bibitem[Che11]{chen2011effects}
Po-An Chen.
\newblock {\em The effects of altruism and spite on games}.
\newblock University of Southern California, 2011.

\bibitem[Cio07]{cioabua2007spectral}
Sebastian~M Cioab{\u{a}}.
\newblock The spectral radius and the maximum degree of irregular graphs.
\newblock {\em arXiv preprint math/0702627}, 2007.

\bibitem[CLR79]{chalupa1979bootstrap}
John Chalupa, Paul~L Leath, and Gary~R Reich.
\newblock Bootstrap percolation on a bethe lattice.
\newblock {\em Journal of Physics C: Solid State Physics}, 12(1):L31, 1979.

\bibitem[DB02]{dezsHo2002halting}
Zolt{\'a}n Dezs{\H{o}} and Albert-L{\'a}szl{\'o} Barab{\'a}si.
\newblock Halting viruses in scale-free networks.
\newblock {\em Physical Review E}, 65(5):055103, 2002.

\bibitem[EGGS15]{ebrahimi2015complex}
Roozbeh Ebrahimi, Jie Gao, Golnaz Ghasemiesfeh, and Grant Schoenebeck.
\newblock Complex contagions in kleinberg's small world model.
\newblock In {\em Proceedings of the 2015 Conference on Innovations in
  Theoretical Computer Science}, pages 63--72, 2015.

\bibitem[EMRS18]{eckles2018long}
Dean Eckles, Elchanan Mossel, M~Amin Rahimian, and Subhabrata Sen.
\newblock Long ties accelerate noisy threshold-based contagions.
\newblock {\em arXiv preprint arXiv:1810.03579}, 2018.

\bibitem[GMT05]{ganesh2005effect}
Ayalvadi Ganesh, Laurent Massouli{\'e}, and Don Towsley.
\newblock The effect of network topology on the spread of epidemics.
\newblock In {\em Proceedings IEEE 24th Annual Joint Conference of the IEEE
  Computer and Communications Societies.}, volume~2, pages 1455--1466. IEEE,
  2005.

\bibitem[Het73]{hethcote1973asymptotic}
Herbert~W Hethcote.
\newblock Asymptotic behavior in a deterministic epidemic model.
\newblock {\em Bulletin of Mathematical Biology}, 35:607--614, 1973.

\bibitem[J{\L}TV12]{janson2012bootstrap}
Svante Janson, Tomasz {\L}uczak, Tatyana Turova, and Thomas Vallier.
\newblock Bootstrap percolation on the random graph g\_n,p.
\newblock {\em The Annals of Applied Probability}, 22(5):1989--2047, 2012.

\bibitem[Jor69]{jordan1869assemblages}
Camille Jordan.
\newblock Sur les assemblages de lignes.
\newblock {\em Journal f{\"u}r die reine und angewandte Mathematik}, 1869.

\bibitem[KM27]{kermack1927contribution}
William~Ogilvy Kermack and Anderson~G McKendrick.
\newblock A contribution to the mathematical theory of epidemics.
\newblock {\em Proceedings of the royal society of london. Series A, Containing
  papers of a mathematical and physical character}, 115(772):700--721, 1927.

\bibitem[Kri19]{krivelevich2019expanders}
Michael Krivelevich.
\newblock Expanders-how to find them, and what to find in them.
\newblock In {\em BCC}, pages 115--142, 2019.

\bibitem[KRSS10]{kumar2010existence}
VS~Anil Kumar, Rajmohan Rajaraman, Zhifeng Sun, and Ravi Sundaram.
\newblock Existence theorems and approximation algorithms for generalized
  network security games.
\newblock In {\em 2010 IEEE 30th International Conference on Distributed
  Computing Systems}, pages 348--357. IEEE, 2010.

\bibitem[LT79]{lipton1979separator}
Richard~J Lipton and Robert~Endre Tarjan.
\newblock A separator theorem for planar graphs.
\newblock {\em SIAM Journal on Applied Mathematics}, 36(2):177--189, 1979.

\bibitem[MOSW08]{meier2008windfall}
Dominic Meier, Yvonne~Anne Oswald, Stefan Schmid, and Roger Wattenhofer.
\newblock On the windfall of friendship: inoculation strategies on social
  networks.
\newblock In {\em Proceedings of the 9th ACM conference on Electronic
  commerce}, pages 294--301, 2008.

\bibitem[MSW06]{moscibroda2006selfish}
Thomas Moscibroda, Stefan Schmid, and Rogert Wattenhofer.
\newblock When selfish meets evil: Byzantine players in a virus inoculation
  game.
\newblock In {\em Proceedings of the twenty-fifth annual ACM symposium on
  Principles of distributed computing}, pages 35--44, 2006.

\bibitem[MV13]{marathe2013computational}
Madhav Marathe and Anil Kumar~S Vullikanti.
\newblock Computational epidemiology.
\newblock {\em Communications of the ACM}, 56(7):88--96, 2013.

\bibitem[PCV{\etalchar{+}}12]{prakash2012threshold}
B~Aditya Prakash, Deepayan Chakrabarti, Nicholas~C Valler, Michalis Faloutsos,
  and Christos Faloutsos.
\newblock Threshold conditions for arbitrary cascade models on arbitrary
  networks.
\newblock {\em Knowledge and information systems}, 33:549--575, 2012.

\bibitem[Rou05]{roughgarden2005selfish}
Tim Roughgarden.
\newblock {\em Selfish routing and the price of anarchy}.
\newblock MIT press, 2005.

\bibitem[Rou06]{roughgarden2006severity}
Tim Roughgarden.
\newblock On the severity of braess's paradox: Designing networks for selfish
  users is hard.
\newblock {\em Journal of Computer and System Sciences}, 72(5):922--953, 2006.

\bibitem[RT02]{roughgarden2002bad}
Tim Roughgarden and {\'E}va Tardos.
\newblock How bad is selfish routing?
\newblock {\em Journal of the ACM (JACM)}, 49(2):236--259, 2002.

\bibitem[RT07]{roughgarden2007introduction}
Tim Roughgarden and Eva Tardos.
\newblock Introduction to the inefficiency of equilibria.
\newblock {\em Algorithmic game theory}, 17:443--459, 2007.

\bibitem[SAPV15]{saha2015approximation}
Sudip Saha, Abhijin Adiga, B~Aditya Prakash, and Anil Kumar~S Vullikanti.
\newblock Approximation algorithms for reducing the spectral radius to control
  epidemic spread.
\newblock In {\em Proceedings of the 2015 SIAM International Conference on Data
  Mining}, pages 568--576. SIAM, 2015.

\bibitem[SAV14]{saha2014equilibria}
Sudip Saha, Abhijin Adiga, and Anil Vullikanti.
\newblock Equilibria in epidemic containment games.
\newblock In {\em Proceedings of the AAAI Conference on Artificial
  Intelligence}, volume~28, 2014.

\bibitem[SY16]{schoenebeck2016complex}
Grant Schoenebeck and Fang-Yi Yu.
\newblock Complex contagions on configuration model graphs with a power-law
  degree distribution.
\newblock In {\em Web and Internet Economics: 12th International Conference,
  WINE 2016, Montreal, Canada, December 11-14, 2016, Proceedings 12}, pages
  459--472. Springer, 2016.

\bibitem[WCWF03]{wang2003epidemic}
Yang Wang, Deepayan Chakrabarti, Chenxi Wang, and Christos Faloutsos.
\newblock Epidemic spreading in real networks: An eigenvalue viewpoint.
\newblock In {\em 22nd International Symposium on Reliable Distributed Systems,
  2003. Proceedings.}, pages 25--34. IEEE, 2003.

\end{thebibliography}

\vfill
\pagebreak

\appendix

\section{Other families}
\label{section:other}

We also obtain price of anarchy bounds for planar graphs, trees, expanders and Erdős-Renyi random graphs. 

\begin{theorem}
\label{th:planarLB}
Let $G$ be a planar graph. If $C \geq L$, then $\poa(G)=\Omega(n^{1/3})$.
\end{theorem}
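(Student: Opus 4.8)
The plan is to combine the universal ``no-inoculation'' Nash equilibrium with an upper bound on the social optimum supplied by the planar separator theorem. First I would observe that, since $C \geq L$ and $G$ is connected, the all-zeros profile $\vect{a}=0^n$ is a Nash equilibrium: with no node inoculating, the attack graph is all of $G$, so $S(i)=n \leq Cn/L = t$ for every node and the conditions of Theorem \ref{th:nashchar} hold. Because the infection then reaches every node with probability $1$, this equilibrium has cost exactly $Ln$, which by Corollary \ref{cor:simpleNashUB} is the maximum possible when $C \geq L$. Hence the worst Nash equilibrium costs $Ln$ and $\poa(G)=Ln/\cost(\vect{a^*})$, so it remains to show $\cost(\vect{a^*})=O(n^{2/3})$ for every planar $G$.

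Next I would exhibit a single good inoculation strategy via separators. Applying the Lipton--Tarjan planar separator theorem recursively (equivalently, taking an $r$-division), for any parameter $r$ there is a set $S$ of $O(n/\sqrt{r})$ vertices whose removal leaves connected components each of size at most $r$. I would inoculate exactly $S$. Writing $k_1,\dots,k_\ell$ for the component sizes of $G\setminus S$, we have $k_i \le r$ and $\sum_i k_i \le n$, so $\sum_i k_i^2 \le r\sum_i k_i \le rn$. By Theorem \ref{th:costPure}, this strategy costs at most
\[
C|S| + \frac{L}{n}\sum_i k_i^2 = O\!\left(\frac{Cn}{\sqrt{r}} + Lr\right).
\]
Treating $C,L$ as constants and balancing the two terms, the choice $r=\Theta(n^{2/3})$ gives $\cost(\vect{a^*}) = O(n^{2/3})$, and combining with the previous paragraph yields $\poa(G)=Ln/O(n^{2/3}) = \Omega(n^{1/3})$.

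The main technical obstacle is the recursive separator bound: I must argue that reducing all components to size at most $r$ costs only $O(n/\sqrt{r})$ inoculations. This follows from the standard recursion in which a piece of size $s$ is split using a separator of size $O(\sqrt{s})$ while each child has size at most $\tfrac{2}{3}$ of its parent; since the piece sizes shrink geometrically, the sums $\sum_i \sqrt{s_i}$ across levels form a geometric series dominated by its last level, telescoping to $O(n/\sqrt{r})$. I would also remark that the bound is tight: the $\sqrt{n}\times\sqrt{n}$ grid has optimal cost $\Theta(n^{2/3})$ (by its isoperimetry, no strategy does better) and hence $\poa=\Theta(n^{1/3})$, witnessing that no planar graph forces a larger universal lower bound.
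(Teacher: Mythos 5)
Your proof is correct and follows essentially the same route as the paper: a recursive application of the Lipton--Tarjan separator theorem to bound the social optimum by $O(n^{2/3})$, combined with the cost-$Ln$ no-inoculation equilibrium available when $C \geq L$. The only difference is cosmetic --- you parameterize by the target component size $r$ (an $r$-division with $O(n/\sqrt{r})$ separator vertices) while the paper parameterizes by the number of components $\ell$ (using $O(\sqrt{n\ell})$ inoculations); setting $\ell = n/r$ shows the two computations are identical.
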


This is the best lower bound possible for planar graphs, as it was shown in~\cite{moscibroda2006selfish} that a (planar) two-dimensional grid has price of anarchy $O(n^{1/3})$. Clearly, planar graphs can have price of anarchy $\Omega(n)$, as the classic star graph example is planar. Even for bounded degree planar graphs, Theorem \ref{th:largerDeltaLB} showed that the lower bound can be as large as $\Omega(\sqrt{\Delta n})$. Next, we apply a similar method to trees, obtaining a stronger lower bound for this type of planar graph.

\begin{theorem}
    \label{th:treeLB}
    Let $T$ be a tree. If $C \geq L$, then $\poa(T) = \Omega(\sqrt{n})$.
\end{theorem}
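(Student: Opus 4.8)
The plan is to mirror the strategy used for planar graphs in Theorem \ref{th:planarLB}: since $C \ge L$, the worst Nash equilibrium is easy to pin down, so the entire task reduces to exhibiting a cheap inoculation strategy, i.e.\ an upper bound on the social optimum. First I would observe that when $C \ge L$ the all-zero profile $\vect{a}=0^n$ is a Nash equilibrium: the attack graph is the whole tree, so $S(i)=n \le Cn/L = t$ for every $i$ and condition (1) of Theorem \ref{th:nashchar} holds. By Theorem \ref{th:costPure} this equilibrium has cost $\frac{L}{n}n^2 = Ln$, so the worst Nash equilibrium costs at least $Ln$ (and, by Corollary \ref{cor:simpleNashUB}, exactly $Ln$). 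Hence $\poa(T) \ge Ln / \cost(\vect{a^*})$, and it remains to prove $\cost(\vect{a^*}) = O(\sqrt n)$ for \emph{every} tree.

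The heart of the argument is a separator lemma for trees: for every tree $T$ on $n$ vertices there is a set $S$ of at most $\sqrt n$ vertices such that every component of $T \setminus S$ has fewer than $\sqrt n$ vertices. I would prove this by rooting $T$ arbitrarily and greedily removing separators from the bottom up: repeatedly take a \emph{deepest} vertex $v$ whose remaining subtree has at least $\sqrt n$ vertices, add $v$ to $S$, and delete its subtree. Because $v$ is deepest, each child-subtree of $v$ already has fewer than $\sqrt n$ vertices and becomes a finished small component; moreover each such step finalizes a disjoint block of at least $\sqrt n$ vertices, so there are at most $\sqrt n$ steps and $|S| \le \sqrt n$. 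When the process halts, no remaining subtree has size $\ge \sqrt n$, so every surviving component is also smaller than $\sqrt n$.

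Finally I would bound the optimum by the cost of inoculating exactly $S$. Writing $k_1, \dots, k_\ell$ for the component sizes of $T \setminus S$, Theorem \ref{th:costPure} gives $\cost(\vect{a^*}) \le C|S| + \frac{L}{n}\sum_i k_i^2$; using $|S|\le\sqrt n$, $k_i < \sqrt n$, and $\sum_i k_i \le n$, the second term is at most $\frac{L}{n}\cdot \sqrt n \cdot n = L\sqrt n$, so $\cost(\vect{a^*}) \le (C+L)\sqrt n$. Combining this with the worst-Nash lower bound yields $\poa(T) \ge \frac{Ln}{(C+L)\sqrt n} = \frac{L}{C+L}\sqrt n = \Omega(\sqrt n)$, where the constant depends only on the fixed ratio $C/L=\Theta(1)$.

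I expect the main obstacle to be stating and proving the separator lemma cleanly — in particular, getting the bookkeeping right so that both $|S|$ and the maximum component size are simultaneously $O(\sqrt n)$, and handling rounding when $\sqrt n$ is not an integer. One could instead phrase the lemma with a free parameter $s$ (at most $n/s$ separator vertices, components of size below $s$) and then minimize $Cn/s + Ls$ over $s$, recovering $s = \Theta(\sqrt n)$; this makes the dependence on $C,L$ transparent and explains why the bound degrades once $C/L$ is no longer bounded. Everything else is a direct application of Theorems \ref{th:nashchar} and \ref{th:costPure}.
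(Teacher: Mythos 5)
Your proof is correct, and at the top level it follows the same template as the paper: when $C \geq L$ the all-zero profile is a Nash equilibrium of cost $Ln$ (you justify this via Theorem \ref{th:nashchar} and Corollary \ref{cor:simpleNashUB}, a step the paper leaves implicit), so the theorem reduces to exhibiting a strategy of cost $O(\sqrt{n})$ on every tree, which both you and the paper extract from a separator statement ($O(\sqrt{n})$ inoculated vertices, all surviving components of size $O(\sqrt{n})$) followed by the same computation via Theorem \ref{th:costPure}. Where you genuinely diverge is in the proof of that separator statement. The paper's Lemma \ref{lemma:trees} iterates the centroid theorem (Theorem \ref{th:jordanTree}), removing a centroid from every component still larger than $\sqrt{n}$, and bounds the total number of removals by a dyadic bookkeeping argument: at most $2^i$ components can have size in $(n2^{-i}, n2^{-i+1}]$, giving at most $2\sqrt{n}-2$ removals in total. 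You instead root the tree and repeatedly delete the subtree of a \emph{deepest} vertex whose remaining subtree still holds at least $\sqrt{n}$ vertices; maximal depth forces every child subtree (hence every finalized component, including the final residue) below $\sqrt{n}$, and the counting is immediate because each step removes a disjoint block of at least $\sqrt{n}$ vertices, so there are at most $\sqrt{n}$ steps and $|S| \leq \sqrt{n}$. Your route is more elementary and self-contained --- no appeal to the centroid theorem, a one-line accounting step, and a marginally better constant --- while the paper's centroid recursion runs in parallel with its planar-separator argument for Theorem \ref{th:planarLB}, exhibiting trees and planar graphs as two instances of the same recursive-decomposition template. Both routes finish identically, and your explicit constant $L/(C+L)$ is consistent with the paper's standing assumption that $C/L = \Theta(1)$.
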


Finally, we show that the price of anarchy for well-connected graphs is $O(1)$. These are graphs with the property that even the optimal strategy has cost $\Omega(n)$. That is, the situation requires that a majority of individuals either inoculate or get infected. 

\begin{theorem}
\label{th:randomPOA}
Let $G=G(n,p)$ for $\displaystyle p \geq \frac{1+\epsilon}{n}$ with $\epsilon>0$. Then, $\poa(G) = O(1)$ with high probability. 
\end{theorem}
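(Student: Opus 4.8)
The plan is to bound the numerator and denominator of $\poa(G)$ separately and show their ratio is $O(1)$ with high probability. For the numerator, Corollary~\ref{cor:simpleNashUB} already gives that every Nash equilibrium---hence the worst one---costs at most $\min\{C,L\}n = O(n)$, since $C,L=\Theta(1)$. Thus the whole problem reduces to showing that the social optimum is $\Omega(n)$ with high probability, i.e. $\min_{\vect{a}}\cost(\vect{a}) = \Omega(n)$. By the remark opening Section~\ref{section:degree}, the optimum is attained by a pure strategy, so it suffices to show that for every inoculated set $S = I_{\vect{a}}$ the pure cost given by Theorem~\ref{th:costPure} is $\Omega(n)$.

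I would argue this by a dichotomy on $|S|$, controlled by two constants $\alpha,\beta>0$ depending only on $\epsilon$. If $|S|\geq\alpha n$, then the inoculation term alone gives $\cost(\vect{a})\geq C|S|\geq C\alpha n = \Omega(n)$. If instead $|S|<\alpha n$, I want the infection term to carry the bound: if the attack graph $G\setminus S$ still contains a single connected component of size at least $\beta n$, then $\sum_i k_i^2 \geq (\beta n)^2$, so $\frac{L}{n}\sum_i k_i^2 \geq L\beta^2 n = \Omega(n)$. Either way the cost is linear, so the optimum is $\Omega(n)$ and the theorem follows.

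Everything therefore rests on the following robustness statement, which I expect to be the main obstacle: there exist constants $\alpha,\beta>0$ such that with high probability, for every vertex set $S$ with $|S|\leq\alpha n$, the graph $G\setminus S$ has a connected component of size at least $\beta n$. This is exactly the adversarial robustness of the supercritical giant component: for $p\geq(1+\epsilon)/n$ the giant has linear size, and one must show it cannot be shattered into sublinear pieces by deleting a sublinear, adversarially chosen set. I would derive it from the expansion properties of the giant component of $G(n,p)$ in the supercritical regime, which are standard in random graph theory.

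The subtlety worth flagging is that a naive first-moment bound over ``empty cuts'' fails near the threshold: a sparse random graph genuinely possesses balanced partitions with no crossing edges (split the many small components evenly), so the existence of such a cut does \emph{not} witness shattering of the giant, and a union bound over balanced cuts has a positive exponent. The argument must instead exploit the internal vertex-expansion of the giant itself---e.g. that a linear-sized core of the giant expands well---and here mere \emph{edge}-expansion is insufficient, because high-degree vertices (degree up to $\Theta(\log n/\log\log n)$) would let an adversary delete many core edges per removed vertex, capping the number of deletions one can tolerate at $o(n/\log n)$. Establishing genuine vertex-robustness of the giant, whether by invoking known results or by a dedicated expansion argument, is the crux; the rest of the proof is the routine two-case accounting above.
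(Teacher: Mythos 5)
Your proposal follows essentially the same route as the paper: your two-case accounting on $|S|$ is precisely the paper's Lemma~\ref{lemma:PoA expanders}, and the robustness statement you isolate (no set of $\alpha n$ vertices can shatter the attack graph into sublinear pieces) is exactly the hypothesis of that lemma. The paper fills the crux just as you anticipate, citing a known result of Krivelevich that supercritical $G(n,p)$ w.h.p.\ contains a linear-sized subgraph with constant \emph{vertex}-expansion, and then proving a short contradiction argument (Lemma~\ref{lemma:dismanteling}) that such a subgraph cannot be broken into components of size $o(n)$ by deleting $\delta n$ vertices---confirming your observation that vertex-expansion of a linear core, rather than edge-expansion or a union bound over cuts, is the right ingredient.
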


\subsection{Planar graphs}

Before proving Theorem \ref{th:planarLB}, we state the \emph{planar separator theorem}:

\begin{theorem}[\cite{lipton1979separator}]
\label{th:planarsep}
Let $G$ be a planar graph on $n$ nodes. There exists a partition of the nodes $V = A \cup S \cup B$ such that 
\begin{itemize}
    \item $|A|, |B| \leq n/2$
    \item $|S| = O(\sqrt{n})$
    \item There does not exist an edge between $A$ and $B$.
\end{itemize}
\end{theorem}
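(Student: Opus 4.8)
The plan is to reproduce the classical Lipton--Tarjan argument, which exploits planarity through a breadth-first layering and then reduces to a graph of small depth, where a short \emph{fundamental-cycle} separator can be found. I would first reduce to the connected case (separate the components, recursing on any component of size larger than $n/2$ and packing the remaining small components greedily into $A$ and $B$) and, by adding edges, assume $G$ is a triangulation: adding edges can only make separation harder and preserves both planarity and the vertex counts, so a separator for the triangulation serves for $G$.

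First I would root a breadth-first search at an arbitrary vertex and sort the vertices into levels $L_0, L_1, \ldots, L_r$ by distance from the root. Because a planar BFS has no edges between non-consecutive levels, every single level $L_t$ is already a separator splitting the levels above it from those below. Let $m$ be the median level, the least index with $\sum_{t \le m}|L_t| \ge n/2$. By a counting argument over the $O(\sqrt{n})$ levels nearest the median I would locate two levels $L_a$ and $L_b$ with $a \le m < b$, each of size $O(\sqrt{n})$, such that $b - a = O(\sqrt{n})$; here the part below level $a$ and the part above level $b$ each contain at most $n/2$ of the vertices by the choice of the median. If the middle region strictly between levels $a$ and $b$ is also small, then $L_a \cup L_b$ already separates $G$ into balanced pieces and has size $O(\sqrt{n})$, and we are done.

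The remaining case, and the crux of the argument, is when the middle region is large. Here I would contract all levels below $a$ to the root and delete all levels above $b$; the resulting planar graph $G'$ has a breadth-first tree of depth only $b - a = O(\sqrt{n})$. Triangulating $G'$ and fixing this spanning tree $T$, every non-tree edge $e$ closes a \emph{fundamental cycle} of length at most $2(b-a)+1 = O(\sqrt{n})$. By the Jordan curve theorem each such cycle partitions the plane, hence the vertices of the middle region, into an inside and an outside, and a face-counting argument shows that some non-tree edge yields a cycle for which neither side carries more than a constant fraction of those vertices. This cycle is the sought $O(\sqrt{n})$-separator of the middle region. The main difficulty lies exactly in this balancing step: one must argue, using planarity together with the bounded tree depth, that among all fundamental cycles there is one that is simultaneously short and roughly balanced (the standard device is to weight the middle-region vertices and track how each candidate cycle splits the total weight).

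Finally I would assemble the global separator as $S = L_a \cup L_b$ together with the fundamental cycle of $G'$, of total size $O(\sqrt{n})$, and check that removing $S$ leaves no edge between the resulting top, bottom, middle-inside, and middle-outside pieces, each of size at most a constant fraction of $n$; these four pieces can then be grouped into $A$ and $B$. To sharpen the balance from the natural $2n/3$ bound to the stated $|A|,|B| \le n/2$, I would invoke the standard amplification: repeatedly separate whichever side still exceeds $n/2$ and move the detached piece to the other side, charging a separator of size $O(\sqrt{\,\text{current size}\,})$ to each round. Since the sizes decrease geometrically, the separators sum to $O(\sqrt{n})$, so the balance improves to $1/2$ at the cost of only a constant-factor increase in the separator size.
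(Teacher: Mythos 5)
You should note first that the paper contains no proof of this statement at all: it is imported verbatim, with citation, from Lipton and Tarjan \cite{lipton1979separator}, so the only meaningful comparison is with the classical proof itself. Your sketch is precisely that classical argument --- BFS layering, two light levels within $O(\sqrt{n})$ of the median level, contracting the bottom and deleting the top so that the middle region has a spanning tree of depth $O(\sqrt{n})$, a short balanced fundamental cycle in a triangulation of that region, and a final balance amplification --- and you correctly isolate the genuinely delicate step, namely that some fundamental cycle is simultaneously short \emph{and} balanced (in Lipton--Tarjan this is proved by taking a non-tree edge whose cycle minimizes the heavier side and deriving a contradiction via the triangle adjacent to it on the heavy side). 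One caution on your last step: the literal rule ``repeatedly separate whichever side exceeds $n/2$ and move the detached piece to the other side'' can fail as stated --- e.g., three pieces of size roughly $n/3$ cannot be repacked into two halves, and moving whole detached pieces back and forth can cycle; the correct form of the standard device keeps a single \emph{working} piece, at each round splits it with the $2/3$-balanced separator, adds the lighter detached part to the lighter bin (this preserves the invariant that each bin holds at most half the total weight, since the lighter bin has weight at most $(n-|W|)/2$ and the added part has weight at most $|W|/2$), and recurses on the heavier part, whose size decays geometrically so that the accumulated separators still total $O(\sqrt{n})$.
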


\begin{proof}[\textbf{Proof of Theorem \ref{th:planarLB}}]
By Theorem \ref{th:planarsep}, any planar graph $G$ can be split into $2$ components of size at most $n/2$ by removing at most $O(\sqrt{n})$ nodes. Iteratively, these components can both be split into $4$ total components of size at most $n/4$ by removing $2 \cdot O(\sqrt{n/2})$ more nodes. Generally, the number of inoculations required to create $\ell:=2^k$ components of size at most $2^{-k}n$ is upper bounded by

\[
\sum_{i=1}^k 2^i \cdot O(\sqrt{n 2^{-i}}) = O(\sqrt{n}) \cdot \sum_{i=1}^k 2^{i/2} = O(\sqrt{n \ell}).
\]
By optimizing over $\ell$, we have that for any planar graph,

\[
\cost(\vect{a^*}) \leq C \cdot O(\sqrt{n \ell}) + \frac{L}{n} \cdot \ell \cdot \left(\frac{n}{\ell}\right)^2 = O\left(C\sqrt{n\ell} + L \frac{n}{\ell} \right) = O(n^{2/3})
\]

\end{proof}

\subsection{Trees}

We obtain an analogous lower bound for trees by a slightly more efficient separator theorem.

\begin{theorem}[\cite{jordan1869assemblages}]
\label{th:jordanTree}
In any tree on $n$ nodes, there exists a node whose removal leaves components of size at most $n/2$.
\end{theorem}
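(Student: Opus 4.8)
The plan is to prove this classical centroid result by an extremal argument on a natural potential. For each node $v$, let $w(v)$ denote the largest size among the connected components of the forest $T \setminus \{v\}$ obtained by deleting $v$. The goal is to exhibit a node with $w(v) \le n/2$; I will show that any node minimizing $w$ already has this property.

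First I would record the key structural fact that, for any node $v$, \emph{at most one} component of $T \setminus \{v\}$ can have size exceeding $n/2$. Indeed, the components of $T \setminus \{v\}$ are disjoint and together contain $n-1$ nodes, so two components each of size $> n/2$ would account for more than $n$ nodes, a contradiction. Thus whenever $w(v) > n/2$ there is a \emph{unique} heavy component $C$ with $|C| = w(v) > n/2$, and a well-defined neighbor $u$ of $v$ lying in $C$.

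The heart of the argument is a monotonicity step: if $w(v) > n/2$ and $u$ is the neighbor of $v$ inside the heavy component $C$, then $w(u) < w(v)$. To see this, consider deleting $u$ instead of $v$. One resulting component is the part of the tree on the $v$-side of the edge $uv$; it consists of $v$ together with everything outside $C$, so it has size $n - |C| < n/2$. Every other component of $T \setminus \{u\}$ is contained in $C \setminus \{u\}$ and hence has size at most $|C| - 1$. Therefore $w(u) \le \max\{\, n - |C|,\ |C| - 1 \,\} = |C| - 1 = w(v) - 1$, using $n - |C| < n/2 < |C|$.

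With these two facts the theorem follows immediately: let $v^\star$ minimize $w$; if $w(v^\star) > n/2$, the monotonicity step yields a neighbor $u$ with $w(u) < w(v^\star)$, contradicting minimality, so $w(v^\star) \le n/2$ as required. (Equivalently, one may read the monotonicity step algorithmically — start at an arbitrary node and repeatedly move into the heavy component; since $w$ strictly decreases through nonnegative integers at each step, the walk terminates at a centroid.) I do not anticipate a genuine obstacle here; the only point requiring care is the size bookkeeping when passing from deleting $v$ to deleting $u$, namely that the $v$-side component has size exactly $n - |C|$ while all remaining components lie strictly inside $C$.
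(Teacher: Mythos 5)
Your proof is correct and complete. Note that the paper does not prove this statement at all --- it is quoted as a classical result of Jordan with a citation, and is only \emph{used} (in Lemma \ref{lemma:trees}) --- so there is no in-paper argument to compare against; your extremal argument on $w(v)$ (at most one component of $T \setminus \{v\}$ can exceed $n/2$, and moving to the neighbor inside the heavy component strictly decreases $w$, so any minimizer of $w$ is a centroid) is the standard proof of this fact, and every step, including the size bookkeeping $w(u) \le \max\{n-|C|,\, |C|-1\} = |C|-1$, checks out.
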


\begin{lemma}
\label{lemma:trees}
Let $T$ be a tree. $T$ can be split into components of size at most $\sqrt{n}$ by removing $O(\sqrt{n})$ nodes.
\end{lemma}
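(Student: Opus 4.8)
The plan is to prove the lemma by iterating Jordan's separator theorem (Theorem~\ref{th:jordanTree}), in direct analogy with the planar argument but exploiting the fact that here each separator is a \emph{single} node rather than a set of $O(\sqrt{n})$ nodes. Concretely, I would maintain a collection of ``active'' pieces, initialized to the single tree $T$. As long as some active piece $P$ has more than $\sqrt{n}$ nodes, I apply Theorem~\ref{th:jordanTree} to $P$ (which is itself a tree) to obtain a node whose deletion breaks $P$ into sub-pieces each of size at most $|P|/2$; I place that node into the separator set $S$ and replace $P$ by its sub-pieces. The process terminates when every active piece has size at most $\sqrt{n}$, which immediately gives the component-size guarantee. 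It then remains only to bound $|S|$, the number of separator nodes removed.

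For the counting, first I would observe that a piece produced after $i$ rounds of splitting has size at most $n/2^i$, so a piece can be split only while $n/2^i > \sqrt{n}$, i.e.\ for at most $\tfrac12 \log_2 n$ levels of recursion. Organizing the splits into the natural rooted forest (the children of a piece are the sub-pieces it splits into), I then want to bound the total number of internal (split) nodes of this forest, since each such node contributes exactly one element to $S$.

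The key step --- and the place where the tree argument departs from the clean planar one --- is that a single Jordan split may create \emph{many} sub-pieces, since the removed node can have large degree; hence one cannot simply assert that the number of pieces doubles at each level. I would handle this with a charging argument: charge the single node removed when splitting a piece $P$ uniformly across the (more than $\sqrt{n}$) nodes of $P$, i.e.\ $1/|P|$ per node, so that each split receives total charge $1$. A fixed node $v$ lies in a nested chain of split pieces whose sizes \emph{halve} at each successive split, and all of these sizes exceed $\sqrt{n}$; hence the charges $v$ receives form a geometric series dominated by its largest (deepest) term and is bounded by $2/\sqrt{n}$. Summing over all $n$ nodes yields $|S| < 2\sqrt{n} = O(\sqrt{n})$, as required. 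I expect this charging (equivalently, recognizing the geometric decay of piece sizes along a nesting chain) to be the main obstacle, since the naive level-by-level count gives only the weaker bound $O(\sqrt{n}\log n)$. As a sanity check, a direct alternative that sidesteps this subtlety is to root $T$ and repeatedly cut off a \emph{lowest} node whose remaining subtree has size at least $\sqrt{n}$: each such cut permanently removes at least $\sqrt{n}$ nodes while leaving only components of size less than $\sqrt{n}$, so at most $\sqrt{n}$ cuts occur.
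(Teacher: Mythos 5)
Your proposal is correct, and the decomposition itself --- iterating Theorem~\ref{th:jordanTree} on every piece larger than $\sqrt{n}$ --- is exactly what the paper does; the difference lies entirely in how the removals are counted. The paper groups split pieces into dyadic size classes: letting $N_i$ be the number of components of size in $(n2^{-i}, n2^{-i+1}]$, it observes that such components are pairwise disjoint, so at most $2^i$ of them can ever arise, and only classes with $i \leq \log_2\sqrt{n}$ ever get split, giving $\sum_{i \leq \log_2\sqrt{n}} 2^i = 2\sqrt{n}-2$ removals. Your charging argument is a per-node dual of this count: you spread the unit cost of each split over the $|P| > \sqrt{n}$ nodes of the split piece and use the halving of sizes along any nested chain to bound each node's total charge by $2/\sqrt{n}$, again yielding fewer than $2\sqrt{n}$ removals. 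Both counts hinge on the same two facts (piece sizes halve, and split pieces exceed $\sqrt{n}$), and both correctly handle the subtlety you flag --- that a single Jordan split can create many sub-pieces, so a naive ``pieces double per level'' count fails. Arguably your charging is slightly more robust: the paper's phrase ``the number of removals required to reduce $N_i$ to $0$'' quietly relies on the observation that level-$i$ components created at \emph{different times} are still pairwise disjoint (once a piece leaves level $i$ it never returns), whereas your accounting never has to confront this. Finally, your sanity-check alternative --- rooting $T$ and repeatedly deleting a lowest vertex whose remaining subtree has size at least $\sqrt{n}$ --- is a genuinely different and even simpler proof: it avoids Jordan's theorem altogether, each deletion permanently retires at least $\sqrt{n}$ vertices (so at most $\sqrt{n}$ deletions occur) while every detached child subtree has size below $\sqrt{n}$ by the choice of a lowest such vertex, giving the same $O(\sqrt{n})$ bound by an entirely elementary argument.
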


\begin{proof}
By Theorem \ref{th:jordanTree}, $T$ can be split into multiple components of size at most $n/2$ by removing one node. Each of these components, in turn, can be split into components of size at most $n/4$ by removing one node each. This process can be repeated until all components have at most $\sqrt{n}$ nodes. Figure~\ref{fig:randomTree} illustrates this algorithm for a particular tree. We now prove that this process requires $O(\sqrt{n})$ removals.

Let $N_i$ denote the number of components in the attack graph with size greater than $n\cdot 2^{-i}$ but at most $n \cdot 2^{-i+1}$. Removing a node from such a component (per Theorem \ref{th:jordanTree}) will decrease $N_i$ by one and increase some $N_{j}$'s where $j>i$. Note that $N_i \leq 2^i$ as component sizes sum to at most $n$. Therefore, the number of removals required to reduce $N_i$ to $0$ for all $i \leq \log_2(\sqrt{n})$ is upper bounded by
\[
\sum_{i=1}^{\log_2 (\sqrt{n})} 2^{i} = 2\sqrt{n} - 2.
\]

\end{proof}

\begin{figure}[ht]
    \centering
    \includegraphics[width=0.35 \textwidth]{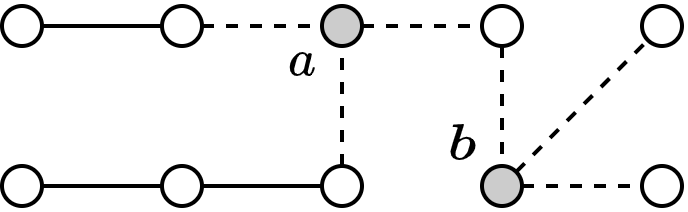}
    \caption{Following the algorithm given by Lemma~\ref{lemma:trees} for this tree, we inoculate node $a$, and then node $b$. If $C \geq L$, then this strategy achieves the lower bound on PoA given by Theorem~\ref{th:treeLB}.} 
    \label{fig:randomTree}
\end{figure}

\begin{proof}[\textbf{Proof of Theorem \ref{th:treeLB}}]
By Lemma~\ref{lemma:trees}, there exists a a strategy that inoculates $O(\sqrt{n})$ nodes, leaving attack graph components of size at most $\sqrt{n}$. This leaves $n-O(\sqrt{n}) \leq n$ insecure nodes whose probability of infection is at most $\sqrt{n}/n=1/\sqrt{n}$. Thus, the cost of this strategy upper bounds the social optimum by
\[
\cost(\vect{a^*}) \leq C \cdot O(\sqrt{n}) + L n \cdot \frac{1}{\sqrt{n}} \leq O(\sqrt{n}).
\]

\end{proof}

\subsection{Random graphs}

The proof of Theorem \ref{th:randomPOA} amounts to the following two lemmas regarding \emph{expander} graphs.

\begin{definition}
An $n$-node graph $G=(V,E)$ is called an $\alpha$-expander if for every subset $S$ of nodes with $|S| \leq n/2$, there are at least $\alpha |S|$ nodes in $V \setminus S$ neighboring a node in $S$. (Here, $\alpha>0$ is a fixed constant independent of $n$.)
\end{definition}

The following lemma may be folklore, but we include the proof here for completeness.

\begin{lemma}
\label{lemma:dismanteling}
Suppose $G$ is an $\alpha$-expander. Then there exists constants $\delta,\epsilon>0$ such that for every subset $A$ of $\delta n$ nodes, removing $A$ from $G$ leaves a connected component of size at least $\epsilon n$. 
\end{lemma}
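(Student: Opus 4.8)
The plan is to argue by contradiction, fixing suitable constants up front and then using the expansion hypothesis to show that a small removed set cannot shatter $G$ into small pieces. Concretely, I would set $\epsilon=1/4$ and take $\delta$ to be any constant strictly below $\alpha/4$ (for definiteness $\delta=\alpha/8$). Given an arbitrary set $A$ with $|A|=\delta n$, write $G'=G\setminus A$ and suppose toward a contradiction that \emph{every} connected component of $G'$ has fewer than $\epsilon n=n/4$ vertices. The goal is to derive $|A|\ge \alpha n/4$, which contradicts $|A|=\delta n<\alpha n/4$.

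The first key step is a greedy bundling of the small components into a near-balanced set. Enumerate the components $C_1,C_2,\dots$ of $G'$ and add them one at a time to a set $S$, stopping as soon as $|S|\ge n/4$. Since $G'$ has $(1-\delta)n$ vertices and $\delta$ is small, this threshold is reachable (it suffices that $(1-\delta)n\ge n/4$). Just before the final addition we have $|S|<n/4$, and the component then added has size below $\epsilon n=n/4$, so afterwards $n/4\le |S|<n/2$. In particular $|S|\le n/2$, so the expansion property applies to $S$.

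The second key step is the observation that lets expansion bite: because $S$ is a \emph{union of whole connected components} of $G'$, there is no edge of $G$ joining $S$ to $G'\setminus S$ (such an edge would merge two distinct components). Hence every node of $V\setminus S$ that neighbors $S$ must belong to $A$. Applying the $\alpha$-expander definition to $S$, the number of nodes in $V\setminus S$ neighboring $S$ is at least $\alpha|S|\ge \alpha n/4$, and all of them lie in $A$, so $|A|\ge \alpha n/4$. This contradicts $|A|=\delta n<\alpha n/4$, and therefore some component of $G\setminus A$ must have at least $\epsilon n$ vertices, proving the lemma.

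I expect the only delicate point to be the balancing of constants in the bundling step. I need each component to be small enough (the $\epsilon n$ bound from the contradiction hypothesis) that the greedy process overshoots its target by less than $\epsilon n$, so that the assembled set $S$ lands in the window $[n/4,n/2)$; this simultaneously keeps $|S|\le n/2$ (so expansion is valid) and keeps $|S|$ a fixed fraction of $n$ (so the bound $|A|\ge\alpha|S|$ exceeds $\delta n$). Choosing $\epsilon=1/4$ and any $\delta<\alpha/4$ resolves the tension cleanly, and the argument is otherwise robust to the precise constants.
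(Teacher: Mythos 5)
Your proposal is correct and follows essentially the same argument as the paper's proof: assume for contradiction that all components of $G \setminus A$ are small, greedily bundle whole components into a set $S$ of linear size at most $n/2$, observe that every external neighbor of $S$ must lie in $A$, and derive a contradiction with the expansion bound $\alpha|S|$. The only difference is bookkeeping: the paper grows $S$ into the window $[(1/2-\epsilon)n, n/2]$ with $\epsilon$ left implicit and $\delta=\alpha/4$, whereas you fix the explicit constants $\epsilon=1/4$, $\delta<\alpha/4$ and target the window $[n/4,n/2)$ — both choices make the same argument go through.
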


\begin{proof}
Suppose that there is a subset of nodes $A$ of size $\alpha n/4$ such that removing $A$ results with connected components $S_1 \ldots S_m$ all of size smaller than $\epsilon n$ where $\epsilon>0$ is a constant to be determined later. This implies that there is a subset $S$ of nodes satisfying 
\[
\left( \frac12 - \epsilon \right)n\leq |S| \leq \frac{n}{2}
\]
with at most $\alpha n/4$ neighbors in $V \setminus S$. 

Indeed, initialize $S$ as the empty set and keep adding components $S_i$ (that were not added already) until $S$ has size at least $(1/2-\epsilon)n$. Note that $S$ has cardinality no larger than $n/2$ and has at most $\alpha n /4$ neighbors. For $\epsilon$ sufficiently small, this is a contradiction to $G$ being an $\alpha$-expander.

\end{proof}

\begin{remark}
Observe that the assertion in Lemma \ref{lemma:dismanteling} applies also if $G$ contains a sub-graph with $\Omega(n)$ nodes which is an $\alpha$-expander.
\end{remark}

\begin{lemma}
\label{lemma:PoA expanders}
For a graph $G$, suppose there exists constants $\delta,\epsilon>0$ such that for every subset $A$ of $\delta n$ nodes, removing $A$ from $G$ leaves a connected component of size at least $\epsilon n$. Then, $\poa(G)=O(1)$. 
\end{lemma}

\begin{proof}
If removing $\delta n$ nodes leaves a component of size at least $\epsilon n$, then 
\begin{itemize}
    \item If $|I_{\vect{a^*}}| \leq \delta n$, then $\cost(\vect{a^*}) \geq L\epsilon^2n$.
    \item If $|I_{\vect{a^*}}| > \delta n$, then $\cost(\vect{a^*}) > C \delta n$.
\end{itemize}
By Corollary \ref{cor:simpleNashUB}, we have

\[
\poa(G) \leq \frac{\min\{C,L\}n}{\min\{C \delta, L \epsilon^2\}n} = O(1).
\]

\end{proof}

\begin{proof}[\textbf{Proof of Theorem \ref{th:randomPOA}}]
Note that $G(n,p)$ may not be an $\alpha$-expander for fixed $\alpha>0$: If edge probability $p=(1+\epsilon)/n$, then $G(n,p)$ contains an induced path of length $\Omega(\log n)$ with high probability~\cite{benjamini2014mixing}. Nevertheless, it is known that if $p \geq (1+\epsilon)/n$, then $G(n,p)$ contains an $\alpha$-expander with $\Omega(n)$ nodes with high probability (see~\cite{krivelevich2019expanders}). Lemmas \ref{lemma:dismanteling} and \ref{lemma:PoA expanders} complete the proof.

\end{proof}

\end{document}